\newtheorem{theorem}{Theorem}
\newtheorem{lemma}{Lemma}
\newtheorem{definition}{Definition}
\DeclarePairedDelimiter\bra{\langle}{\rvert}
\DeclarePairedDelimiter\ket{\lvert}{\rangle}
\DeclarePairedDelimiterX\braket[2]{\langle}{\rangle}{#1\,\delimsize\vert\,\mathopen{}#2}
\DeclarePairedDelimiter\ceil{\lceil}{\rceil}
\DeclareMathOperator*{\argmax}{arg\,max}
\begin{document}

\title{Coreset selection can accelerate quantum machine learning models with provable generalization}

\author{Yiming Huang}
\affiliation{Center on Frontiers of Computing Studies, Peking University, Beijing 100871, China}
\affiliation{School of Computer Science, Peking University, Beijing 100871, China}

\author{Huiyuan Wang}
\affiliation{Peterhouse, Univeristy of Cambridge, CB2 1RD Cambridge, Cambridgeshire, U.K.}

\author{Yuxuan Du}
\affiliation{JD Explore Academy, Beijing 101111, China}

\author{Xiao Yuan}
\affiliation{Center on Frontiers of Computing Studies, Peking University, Beijing 100871, China}
\affiliation{School of Computer Science, Peking University, Beijing 100871, China}

%\thanks{You can use the \texttt{\textbackslash{}email}, \texttt{\textbackslash{}homepage}, and \texttt{\textbackslash{}thanks} commands to add additional information for the preceding \texttt{\textbackslash{}author}. If applicable, this can also be used to indicate that a work has previously been published in conference proceedings.}
%\affiliation{Covestro Deutschland AG, Kaiser-Wilhelm-Allee 60, 51373 Leverkusen, Germany}

\maketitle

\begin{abstract}
Quantum neural networks (QNNs) and quantum kernels stand as prominent figures in the realm of quantum machine learning, poised to leverage the nascent capabilities of near-term quantum computers to surmount classical machine learning challenges. Nonetheless, the training efficiency challenge poses a limitation on both QNNs and quantum kernels, curbing their efficacy when applied to extensive datasets. To confront this concern, we present a unified approach: coreset selection, aimed at expediting the training of QNNs and quantum kernels by distilling a judicious subset from the original training dataset. Furthermore, we analyze the generalization error bounds of QNNs and quantum kernels when trained on such coresets, unveiling the comparable performance with those training on the complete original dataset. Through systematic numerical simulations, we illuminate the potential of coreset selection in expediting tasks encompassing synthetic data classification, identification of quantum correlations, and quantum compiling. Our work offers a useful way to improve diverse quantum machine learning models with a theoretical guarantee while reducing the training cost.
\end{abstract}

%-----------------------------------------------------
%-----------------introduction------------------------
%-----------------------------------------------------
\section{Introduction}

Quantum neural networks (QNNs) \cite{schuld2014quest,farhi2018classification,havlivcek2019supervised,cong2019quantum} and quantum kernels \cite{schuld2019quantum,Huang2021PowerOD} have emerged as pivotal models in the burgeoning field of quantum machine learning (QML) 		\cite{Biamonte2017QuantumML,Benedetti2019ParameterizedQC,cerezo2021variational}, poised to unlock the power of near-term quantum computers to address challenges that elude classical machine learning paradigms \cite{preskill2018quantum,bharti2021noisy}. The allure of these models is rooted in a fusion of theoretical advances and practical adaptability. That is, theoretical evidence showcases their superiority over classical counterparts in diverse scenarios, spanning synthetic datasets, discrete logarithmic problems, and quantum information processing tasks \cite{Huang2021PowerOD,Liu2021ARA,abbas2021power,jager2023universal,du2022demystify,wu2023quantum}, as measured by sample complexity and runtime considerations. Complementing their theoretical strength, their implementation displays flexibility, adeptly accommodating constraints posed by contemporary quantum hardware, including qubit connectivity and limited circuit depth. This convergence of theoretical promise and practical flexibility has spurred a wave of empirical investigations, substantiating the viability and potential benefits of QNNs and quantum kernels across real-world applications such as computer vision \cite{peters2021machine,huang2021experimental,pan2023experimental} and quantum physics \cite{ren2022experimental,herrmann2022realizing}.

Despite their promising potential, QNNs and quantum kernels confront a pertinent challenge concerning the training efficiency, resulting in a constrained practical applicability towards large-scale datasets \cite{schuld2019evaluating}. This limitation is particularly evident due to the absence of fundamental training mechanisms like back-propagation and batch gradient descent in the majority of QNNs, imperative for the swift training of deep neural networks \cite{Goodfellow2015DeepL}. Similarly, the training process of quantum kernels necessitates the collection of a kernel matrix of size $O(N^2)$, with $N$ being the number of training examples and each entry demanding independent evaluation via a specific quantum circuit. Consequently, the capacities of both QNNs and quantum kernels to effectively navigate vast training datasets, characterized by millions of data points, are compromised.

In response to the above challenge, several research lines have emerged to enhance the training efficiency of QNNs. The first line embarks on improving the optimizer or the initialization methods, seeking to expedite convergence towards the minimal empirical risk through a reduction in measurements and iterations \cite{verdon2019learning,stokes2020quantum,gacon2021simultaneous,van2021measurement,volkoff2021large}. Nonetheless, the non-convex nature of the loss landscape cautions against potential entrapment within saddle points during the optimization process \cite{sweke2020stochastic,du2021learnability,bittel2021training}. The second avenue delves into feature dimension reduction techniques, enabling more streamlined utilization of quantum resources for each data point \cite{wilson2018quantum,hur2022quantum}. However, this approach may not necessarily alleviate overall runtime complexity, potentially even exacerbating it. The third research pathway navigates the realm of QNN expressivity, imposing judicious constraints to facilitate the integration of back-propagation through meticulously engineered ansatzes \cite{bowles2023backpropagation,abbas2023quantum}. Nevertheless, the extent to which these constrained ansatzes might impact QNN performance on unseen data remains a dynamic yet unresolved facet.    
 
The endeavor to enhance the training efficiency of quantum kernels has received less attention in contrast to QNNs. This divergence in attention stems from the shared observation that both classical and quantum kernels demand $O(N^2)$ runtime for the collection of the kernel matrix. Existing literature focused on enhancing the training efficiency of quantum kernels has predominantly centered on the development of advanced encoding methods \cite{glick2021covariant,hubregtsen2022training}. These methods aim to mitigate the usage of quantum resources and attenuate the manifestation of barren plateaus \cite{thanasilp2022exponential}. However, despite the cultivation of various research trajectories directed at enhancing QML models, it is noteworthy that these approaches often retain model-specific attributes, potentially harboring unforeseen side effects. This realization begets a pivotal inquiry: Can a unified approach be devised that systematically enhances the training efficiency of both QNNs and quantum kernels while safeguarding a steadfast theoretical guarantee?

In this study, we provide an affirmation of the above question by introducing the coreset selection techniques into QML. Conceptually, coreset is an effective preprocessing approach to distill a weighted subset from the large training dataset, which guarantees that models fitting the coreset also provide a good fit for the original data. Considering that the training efficiency of both QNNs and quantum kernels hinges on the number of training examples, coreset provides a unified way to enhance their training efficiency.  Moreover, the theoretical foundation of coreset is established on the recent generalization error analysis of QNNs and quantum kernels, indicating that few training examples are sufficient to attain a good test performance. In this regard, we provide a rigorous analysis of the generalization ability of QNNs and quantum kernels training on the coreset.  The achieved bounds exhibit the comparable generalization performance of QNN and quantum kernel learning when they are optimized under the original dataset and the coreset. Numerical simulations on synthetic data, identification of non-classical correlations in quantum states, and quantum circuit compilation confirm the effectiveness of our proposal.

\section{Related works}
The prior literature related to our work can be divided into two classes: the algorithms for accelerating the optimization of QML models and the generalization error analysis of QML models. In the following, we separately explain how our work relates to and differs from the previous studies.

\noindent\textbf{Acceleration algorithms for QML models}. As aforementioned, various algorithms have been introduced to expedite the optimization of QNNs rather than quantum kernels. These algorithms can be classified into three distinct categories, each addressing different facets of optimization improvement: data feature engineering, QNN architecture design, and optimizer enhancement.

In the realm of data feature engineering, the core concept revolves around implementing dimensionality reduction techniques such as principal component analysis and feature selection during the pre-processing stage \cite{bishop2007}. This strategy effectively reduces the quantum resources required for processing data points compared to their unprocessed counterparts. Within the domain of architecture design, there exists a dual focus on encoding strategy design \cite{perez2020data,lloyd2020quantum} and ansatz design \cite{bilkis2021semi,zhang2022differentiable,du2022quantum,sauvage2022building,larocca2022group}. The underlying principle emphasizes the utilization of a minimal number of qubits, trainable parameters, and shallower circuit depths, all geared towards achieving competent performance in learning tasks. In parallel, efforts to enhance optimizers have also garnered attention. The deployment of higher-order gradient descent or machine-learning assisted optimizers \cite{verdon2019learning,stokes2020quantum,gacon2021simultaneous,van2021measurement,volkoff2021large} and distributed learning schemes \cite{du2022distributed} has been advocated as a means to accelerate convergence rates and wall-clock time, thereby further augmenting the efficiency of QNN optimization.  

Our work is complementary to the above approaches since the reduced size of data is independent of data feature engineering, QNN architecture design, and optimizer enhancement. In other words, QNNs trained on coreset can be further accelerated by the above approaches. Moreover, different from prior literature that concentrates on the acceleration of QNNs, coreset selection can be directly employed to accelerate the collection of quantum kernels.   

\noindent\textbf{Generalization of QML models}. Several studies have undertaken the task of quantifying the generalization error of QNNs through the lens of the foundational learning-theoretic technique known as uniform convergence \cite{banchi2021generalization,caro2021encoding,gyurik2021structural,du2022efficient,caro2022generalization}. In general, the resulting bound for generalization adheres to the format of $O(\sqrt{p/N})$, where $p$ denotes the count of trainable parameters and $N$ signifies the number of training instances. For quantum kernels, the generalization error upper bound scales with $O(\sqrt{C_1/N})$, where $C_1$ depends on the labels of data and the value of quantum kernel \cite{Huang2021PowerOD}. When system noise is considered, the generalization error upper bound degrades to  $O(\sqrt{C_1/N}+N/(C_2\sqrt{m}))$, where $m$ is the shot number and $C_2$ depends on the kernel value, shot number, and the noise level \cite{wang2021towards}. 

Our work leverages the above analysis to quantify how the coreset selection effects the learning performance of QNNs and quantum kernels, respectively.

\smallskip

We note that although Ref.~\cite{xue2023near} also discusses quantum coreset. Their primary emphasis lies in employing fault-tolerant quantum computers to speed up the construction of coresets, a subject that falls beyond the scope of our work. In addition, Ref.~\cite{qu2022performance} and Ref.~\cite{otgonbaatar2021assembly} illustrated the potential of coreset in specific applications, i.e., clustering and image classification, without a unified view and generalization error analysis.

\section{Preliminaries}
In this section, we first introduce the concept of machine learning and corerset. Then, we formally introduce the mechanism of quantum neural networks (QNNs) and quantum kernels under the supervised learning paradigm.

\subsection{Foundations of machine learning}\label{sec:pred-ML}
Let ${\cal{S}}_t=\{{{\bm{x}}}_i,y_i\}_{i=1}^{N_t}$ be the dataset in which each paired training example $({{\bm{x}}}_i,y_i)$ is independent and identically sampled over $\cal{Z}=\cal{X}\times\cal{Y}$ with probability distribution $p_{\cal{Z}}$, where $\cal{X}$ is feature space and $\cal{Y}$ the label space. The goal of supervised learning algorithms is to find a hypothesis $f:\cal{X}\rightarrow \cal{Y}$ with trainable parameters ${\bm{w}}$ such that the true risk $R$ on the distribution $p_{\cal{Z}}$ is minimized with
\begin{equation}
R =   \mathbb{E}_{{({{\bm{x}}},y)}\sim p_{\cal{Z}}}[l(f_{\bm{w}}(\bm{x}),y)],
\end{equation}
where $l(\cdot,\cdot)$ is the loss function used to measure the degree of fit between the output of the hypothesis and its corresponding ground truth.  As the distribution $p_{\cal{Z}}$ is unknown and, even given, accessing all the data over $\cal{Z}$ would be impractical, in practice, the optimal hypothesis is estimated by optimizing $\bm{w}$ to minimize  the empirical risk $R_e$ over the training dataset, i.e.,
\begin{equation}
R_e = \frac{1}{N_t}\sum_{({{\bm{x}}}_i,y_i)\in \cal{S}}l(f_{\bm{w}}({{\bm{x}}}_i) ,y_i).
\end{equation}

\subsection{Coreset in machine learning}

As the learning algorithm evolves, not only are models becoming increasingly complex, but training datasets are also growing larger. With growing volumes of data, the challenges on how to organize and analyze the massive data force us to devise effective approaches to condense the enormous data set. The concept of coreset selection, as a paradigm for extracting a data subset that is comprised of informative samples such that the generalization performance of the model trained over this reduced set is closed to those trained on the entire data set \cite{agarwal2005geometric}, is a promising solution to address the issue.
\begin{definition}{\bf{(Coreset)}}\label{def:coreset}
Let $P$ be a set of points in space $\mathcal{V}$, and $f$ be a monotone measure function, we call a subset $Q\subseteq P$ as an $\epsilon$-coreset of $P$, if 
\begin{equation}
| f( Q) - f(P) | \leq \epsilon \cdot f(P).
\end{equation}
\end{definition}

Various coreset selection approaches are exploited to assist in dealing with computationally intractable problems in different learning tasks and data types \cite{bachem2017practical,feldman2020introduction}. Throughout the whole work, we consider a geometry-based method to construct coreset  \cite{farahani2009facility}. As shown in Fig.~\ref{pic:kcp}, the goal of coreset construction is to find $k$ data points as the centers $\mathcal{C}$ such that the maximum distance between any point $s\in \mathcal{S}$ to its nearest center is minimized, i.e., selecting $\mathcal{C}$ such that the radius $\delta_c$ of which is minimized. The optimization of finding the coreset can be formulated as
\begin{equation}\label{eq:k-center}
\mathcal{S}_c = \arg\min_{\mathcal{C} \subseteq \mathcal{S}, |\mathcal{C}|=k} \max_{{{\bm{x}}}_j\in \mathcal{S}}D({{\bm{x}}}_j,{{\bm{x}}}_c),
\end{equation}
where $D({{\bm{x}}}_i,{{\bm{x}}}_c)=\min_{{{\bm{x}}}_c \in \mathcal{C}}d({{\bm{x}}}_i,{{\bm{x}}}_c)$ denotes the distance between the point $i$ to its closest center. Although it is an NP-Hard problem to find $\mathcal{S}_c$, there is a provable greedy algorithm can efficiently get a 2-approximate solution, i.e., if $\mathcal{C}^*$ is the optimal solution of Eq. (\ref{eq:k-center}), we can efficiently find a solution $\mathcal{C}$ such that $\delta_{\mathcal{C}^*} \leq \delta_{\mathcal{C}} \leq 2\cdot\delta_{\mathcal{C}^*}$.

\begin{figure}[H]
\centering
\includegraphics[width=8cm,height=5.5cm]{./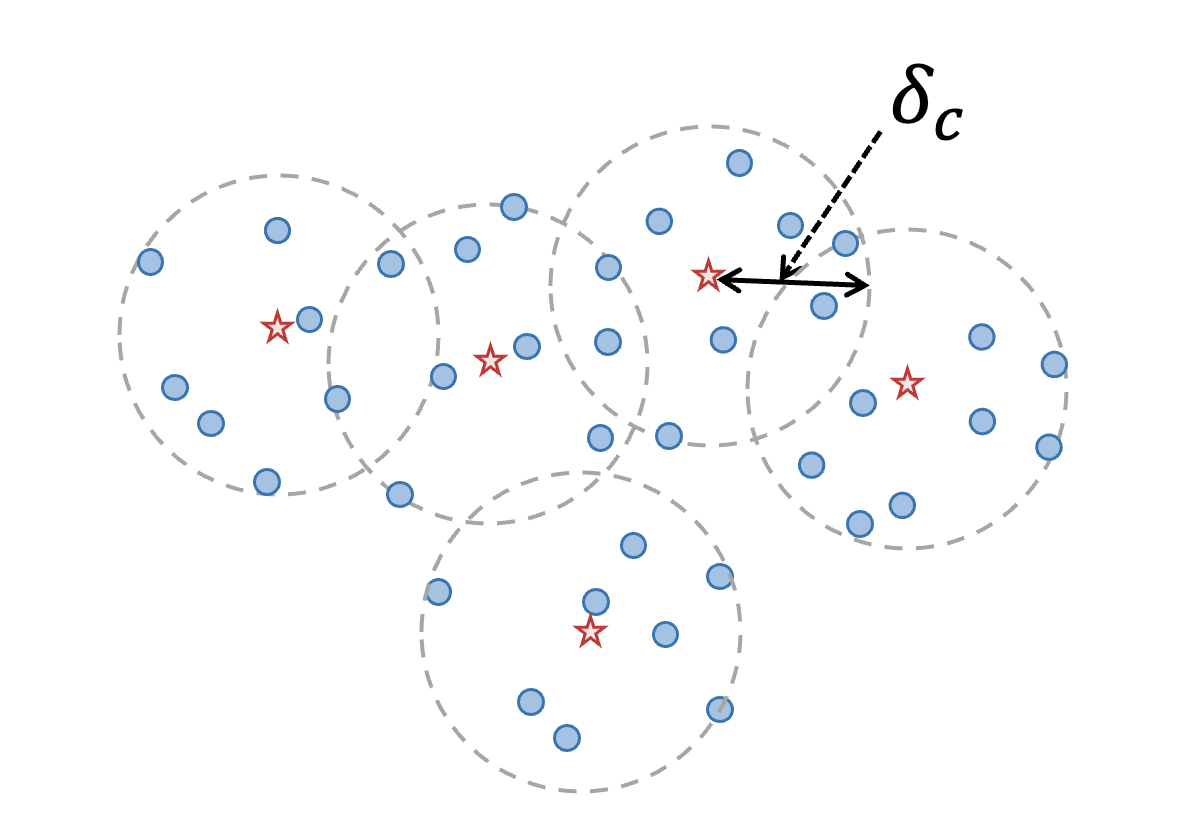}
\caption{Coreset construction as a $k$-center problem. The red stars are $k$ picked points with $\delta_c$ radius covering the entire set. In the case shown in the figure, there are 5 center data points $P_k\in \mathcal{S}$ with $k\in\{1,2,...,5\}$ such that the maximum distance from any point in $\mathcal{D}$ to its closest center is minimized.}
\label{pic:kcp}
\end{figure}

\subsection{Quantum neural networks}\label{sec:qnn}
 
Quantum neural network (QNN) refers to a class of neural networks that leverages the power of variational quantum circuits and classical optimizers to tackle learning problems. A QNN is mainly composed of three components: feature encoding, variational parametrized circuit, and measurement, as depicted in Fig.~\ref{pic:qnn}. Generally, feature encoding utilizes an encoding circuit $U_{\bm{x}}$ to map the classical input $\bm{x}$ into an $n$-qubit state $|\bm{x}\rangle$. The concrete approaches of feature encoding are diverse, as outlined in \cite{lloyd2020quantum,jerbi2023quantum}. A variational parametrized circuit $U_{\theta}$ evolves the feature state $|\bm{x}\rangle$ to a parametrized state $|\bm{x},\theta\rangle$, where the parameters $\theta$ are to be tuned for minimizing the training loss. Measurement extracts the processed information stored in parametrized state $|\bm{x},\theta\rangle$ into the classical register and may combine with a post-processing operation to form the output of QNN. 

\begin{figure}[H]
\centering
\includegraphics[width=12cm,height=4.5cm]{./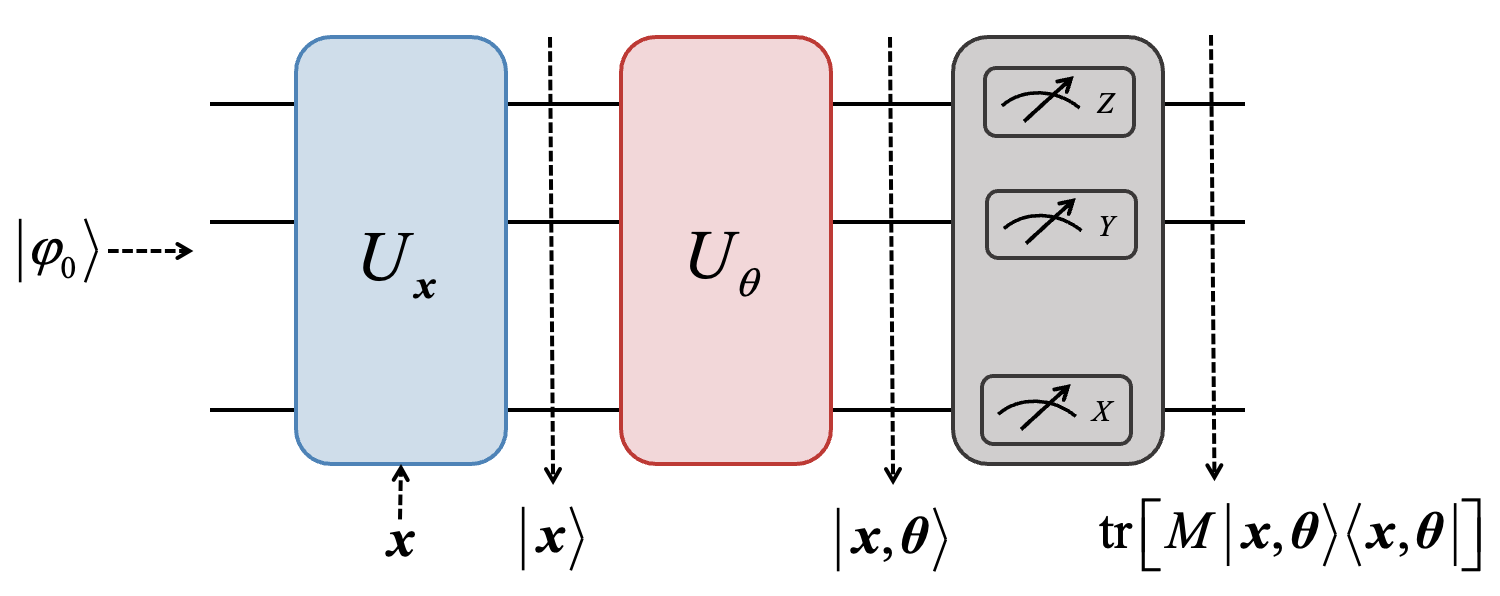}
\caption{A general feed-forward process of quantum neural networks. The input state $|\phi_0\rangle $ is firstly fed to feature encoding block to map the classical data $\bm{x}$ to $|\bm{x}\rangle$, and then use variational circuit $U_\theta$ to form a parametrized state $|\bm{x},\theta\rangle$ which is used for minimizing the loss function. In the end, the measurement output will be used to estimate the prediction residuals.}
\label{pic:qnn}
\end{figure}

In this work, we consider a QNN implemented by the data-reuploading strategy \cite{jerbi2023quantum,perez2020data,schuld2021effect}, alternating the feature encoding circuit $U_{\bm{x}}$ and the variational circuit $U_{\theta}$ to generate parametrized state $|\bm{x},\theta\rangle$, i.e., 
\begin{equation}
	|\bm{x},\theta\rangle =U_{\theta}U_{\bm{x}}... U_{\bm{x}} U_{\theta} U_{\bm{x}}\ket{\varphi}.
\end{equation}
Without loss of generality, the  feature encoding circuit and the variational circuit take the form as
\begin{equation}
	U_{\bm{x}}  = \bigotimes_{j=1}^n \exp{(-i\bm{x}_jH)}~\text{and}~ U_{\theta} = \bigotimes_{j=1}^n \exp(-i\theta_j H)V,
\end{equation} 
where $H \in \{\sigma_x,\sigma_y,\sigma_z\}$ is the Hermitian operator and $V$ refers to the fixed gates such as a sequence of CNOT gates. Once the variational state $|\bm{x},\theta\rangle$ is evolved, the measurement operator $M$ is applied to obtain the estimated expectation value. Given the training data set $\mathcal{S} = \{\bm{x}_i,y_i\}_{i=1}^{N_t}$, the empirical risk of QNN over $\mathcal{S}$ is 
\begin{equation}\label{eqn:sec:QNN-preliminary}
    R_e^{\text{QNN}}(\theta) = \frac{1}{|\mathcal{S}|}\sum_{\bm{x}_i,y_i\in \mathcal{S}}l(tr[M|\bm{x}_i,\theta\rangle\langle\bm{x}_i,\theta|],y_i).
\end{equation}
A possible choice of $l$ is the mean square error, i.e., $l(a, b)=(a-b)^2$. The optimization of QNN, i.e., the minimization of $ R_e^{\text{QNN}}$, can be completed by the gradient-descent optimizer based on the parameter shift rule  \cite{schuld2019evaluating}.  

\subsection{Quantum kernels}

The kernel methods have been extensively studied in classical machine learning and applied to various scenarios such as face recognition and the interpretation of the dynamics of deep neural networks  \cite{yang2001face,jacot2018neural}. Their quantum counterparts have also been extensively investigated from both experimental and theoretical perspectives \cite{havlivcek2019supervised,wang2021towards}. Formally, quantum kernels leverage quantum feature maps that encode the classical vector $\bm{x}$ into a higher Hilbert space to perform the kernel trick. One well known quantum kernel function $\kappa(\bm{x},\bm{x}')$ is defined as the overlap between the quantum states $|\bm{x}\rangle$ and $|\bm{x}'\rangle$ that encodes $\bm{x}$ and $\bm{x}'$ via an $n$-qubit feature encoding circuit $U_e$, i.e. $\kappa(\bm{x},\bm{x}')=|\langle\bm{x}'|\bm{x}\rangle|^2$ with $|\bm{x}\rangle = U_{e}(\bm{x})|+\rangle^{\otimes N}$ and $|+\rangle=H\ket{0}$, as shown in Fig.~\ref{pic:qkernel}.  In this work, we consider a generic  quantum feature map proposed in \cite{havlivcek2019supervised},  i.e.,
\begin{equation}
U_e(\bm{x}) =  \exp{(\sum_i\bm{x}_i\sigma_j^{Z}+\sum_{i,j}(\pi-\bm{x}_i)(\pi-\bm{x}_j)\sigma_i^{Z}\sigma_j^{Z})}.
\end{equation}
We note that other symmetric positive-definite functions are also valid candidates for implementing quantum kernels. Different forms of the quantum kernel correspond to different feature maps in Hilbert space, which could provide quantum merits for classification tasks if designed properly \cite{J_ger_2023}.

\begin{figure}[H]
\centering
\includegraphics[width=10cm,height=4cm]{./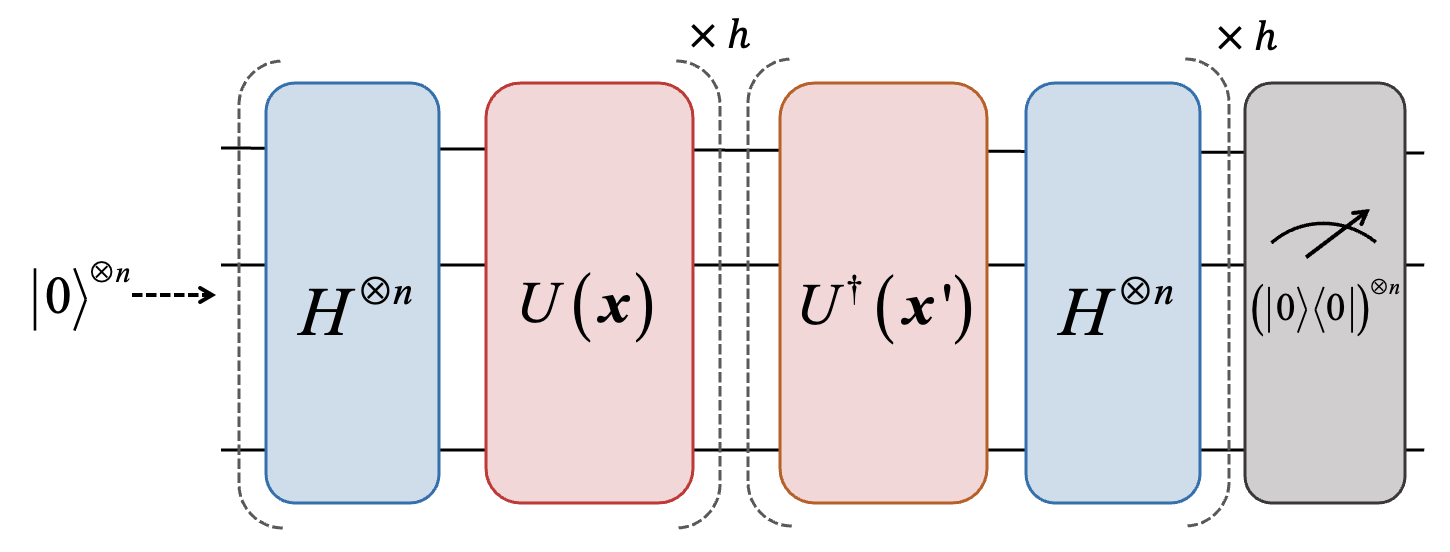}
\caption{The quantum circuit implementation of quantum kernel $\kappa{(\bm{x},\bm{x}')}$ considered in this work. The circuit separately encodes $\bm{x}$ and $\bm{x}'$ as the parameter of $h$ layers variational circuit $U$ and $U^{\dagger}$ into the states $|\bm{x}\rangle$ and $|\bm{x}'\rangle$ .}
\label{pic:qkernel}
\end{figure}

% \begin{theorem}\label{thm:qkernel-generalization-svm}{\bf{(Generalization bound for Support Vector Machine)}}
% Suppose the Support Vector Machine that is optimized on the training set ${\cal{S}}_t=\{{{\bm{x}}}_i, y_i\}_{i=1}^{N_t}$ has a functional margin $\gamma >0 $, and an in-sample margin error $\mathcal{L}_{S_t}^{\bm{w}}:=\Pr_{({\bm{x}},y)\in {\cal{S}}_t}[y \langle \bm{w},\phi({{\bm{x}}})\rangle > \gamma]$. Assume also that the size of $\mathcal{X}$ is bounded by radius $r$ and the optimized weight vector $\bm{w}$ has Euclidean norm less than 1, which could be done by re-scaling. Then, according to Theorem 1 in \cite{DBLP:journals/corr/abs-2006-02175}, the out-of-sample error $\mathcal{L}_{\mathcal{Z}}(\bm{w})$ (i.e. misclassification error $\Pr_{({\bm{x}},y)\sim{\cal{Z}}}[\sign{\langle \bm{w}, \phi({{\bm{x}}})\rangle } \neq y])$ is bounded by the following form with probability at least $(1-\delta)$  over the choice of $i.i.d$ training data ${\cal{S}}_t$ from $\mathcal{Z}^{N_t}$:
% \begin{equation}
% \mathcal{L}_{\mathcal{Z}}(\bm{w}) \leq \mathcal{L}_{S_t}^{\bm{w}}+
% \mathcal{O}(\sqrt{\frac{(r/\gamma)^2+\log(1/\delta)}{N_t}})
% \end{equation}
% Note that $\mathcal{L}_{\mathcal{Z}}(\bm{w})$ and $\mathcal{L}_{S_t}^{\bm{w}}$ here is equivalent to $R$ and $R_e$ respectively for a loss function of the form of Equation \ref{eq:qkernel_loss} and a margin of the form $\gamma= \min (y \langle \bm{w},\phi({{\bm{x}}})\rangle)$ provided that $ y \langle \bm{w}, \phi({{\bm{x}}})\rangle>0$ and $ ({\bm{x}},y)\sim \mathcal{S} $.
% \end{theorem}

\section{Coreset selection for QML models}

This section first introduces the algorithmic implementation of coreset selection. Then, we elucidate how to use the constructed coreset to train QNNs and quantum kernels.

The Pseudo code of coreset selection for QML models is summarized in   Algorithm~\ref{alg:coreset}.  The main idea is regarding the coreset selection as a data preprocessing strategy, which aims to find $k$ balls, centered at points $\{\bm{x}_s\}_{s=1}^k$ in data space $\mathcal{P}$, to cover the whole data set with the radius $\delta_c$. Let  $\mathcal{S}=\{(\bm{x}_1,y_1),\dots, (\bm{x}_{N_t},y_{N_t})\}$ be an $n_c$-class data set and a hyper-parameter $k$ satisfy $k \gg n_c$. We apply the following procedure to each class to construct the coreset.\\

%---------alg: coreset-------------
%------------------------------
\IncMargin{2em}
\begin{algorithm}[H]
\SetKwData{Left}{left}\SetKwData{This}{this}\SetKwData{Up}{up}
\SetKwFunction{Union}{Union}\SetKwFunction{FindCompress}{FindCompress}
\SetKwInOut{Input}{input}\SetKwInOut{Output}{output}

\Input{Data set $\mathcal{S}=\{(\bm{x}_1,y_1),\dots, (\bm{x}_{N_t},y_{N_t})\}$, covering number $k$.}
\Output{The subset $\mathcal{S}_c \subseteq \mathcal{S}$ with $|\mathcal{S}_c|=k$ and corresponding weight $\{\gamma_s\}_{s=1}^k$. }
\BlankLine
%\emph{special treatment of the first line}\;
%$S_c \leftarrow \emptyset$\;
\For{$i \leftarrow 1$ \KwTo $n_c$}
{
$\mathcal{S}_c^{(i)} \leftarrow \emptyset$; \tcc{initialize the coreset for each class}
}
\For{$i \leftarrow 1$ \KwTo $n_c$}
{
$\mathcal{S}_c^{(i)} \leftarrow \bm{x} \in \mathcal{S}^{(i)}$\; \tcc{randomly select point $\bm{x}$ from set $\mathcal{S}^{(i)}$ associated with class $i$, and add it to set $\mathcal{S}_c^{(i)}$.}
\While{$|\mathcal{S}_c^{(i)}| < \ceil*{\frac{|\mathcal{S}^{(i)}|}{|\mathcal{S}|}\cdot k}$}
{
	$\bm{x} = \argmax_{\bm{x}\in\mathcal{S}^{(i)}\textbackslash 	{\mathcal{S}_c^{(i)}}} \min_{\bm{x}'} d(\bm{x},\bm{x}')$\;
	$\mathcal{S}_c^{(i)} \leftarrow \mathcal{S}_c^{(i)} \bigcup \{\bm{x}\} $\;
}
}
$s = 1$\;
% \textcolor{red}{$p = 1$; notation $s$ of $\gamma_s$ is mismatched with $j$ below. }\;
\For{$j \leftarrow 1$ \KwTo $n_c$}
{
    \For{ $\bm{x} \in \mathcal{S}_c^{(j)}$}
    {
    $\gamma_{s} \leftarrow \sum_{\bm{x}'\in \mathcal{S}^{(j)}}\bm{I}_{|\bm{x}-\bm{x}'|\leq \delta_c^{(j)}}$\;
	\tcc{use the indicator function to count the number of same-class training samples in radius $\delta_c$.}
	$s\leftarrow s+1$\;
    }
}

$\mathcal{S}_c \leftarrow \emptyset$\;
\For{$i \leftarrow 1$ \KwTo $n_c$}
{
	$\mathcal{S}_c \leftarrow \mathcal{S}_c \bigcup \mathcal{S}_c^{(i)}$
}
return $\mathcal{S}_c$ and $\{\gamma_s\}_{s=1}^k$.
\caption{The greedy algorithm for $k$-center coreset selection.}\label{alg:coreset}\DecMargin{1em}
\end{algorithm}\DecMargin{1em}
\vspace{1em}

For each class $i\in [n_c]$, at the first step, we randomly pick a data point $\bm{x}$ from the set $\mathcal{S}^{(i)}\subseteq \mathcal{S}$ and put it into the initialized empty set $\mathcal{S}_c^{(i)}$, where $\mathcal{S}^{(i)}$ refers to the set of all training data points associated with the label $i$. In the second step, we iteratively choose the data point from $\mathcal{S}^{(i)}$ to be as far away as possible from the other centers in $\mathcal{S}_c^{(i)}$. That is, for each class $i$, repeatedly finding a data point $\bm{x}$ for which the distance $d(\bm{x},\bm{x}')$ is maximized where $ \bm{x}\in \mathcal{S}^{(i)},\bm{x}' \in \mathcal{S}_c^{(i)}$. The searched data point is then appended to $\mathcal{S}_c^{(i)}$. This iteration procedure is terminated  when $|\mathcal{S}_c^{(i)}| \geq \ceil*{\frac{|\mathcal{S}^{(i)}|}{|\mathcal{S}|}\cdot k}$. In other words, the coreset size of each class $i$ is restricted to be proportional to the ratio of the size of that class $|\mathcal{S}^{(i)}|$ to the overall amount of data $|\mathcal{S}|$. 
When the coreset for each class is collected, we merge these sets together to create a coreset with $\mathcal{S}_c = \cup_{i=1}^{n_c} \{\mathcal{S}_c^{(i)}\}$ and set the weight $\gamma_s$ as the number of samples covered by the coreset example $\bm{x}_s$ in radius $\delta_c$. Once the coreset $\mathcal{S}_c$ is built, we can integrate it into QML models. In the following, we introduce how it works on QNNs and quantum kernels. 

The Pseudo code of QNN trained over the coreset is summarized in Algorithm~\ref{alg:qnn_coreset}. Conceptually, we only need to replace the full training data $\mathcal{S}$ with the coreset $\mathcal{S}_c$, and rewrite the cost function by introducing the weight $\{\gamma_s\}$ for each corresponding data point $\bm{x}$ in coreset $\mathcal{S}_c$, i.e.,
\begin{equation}
R_{c}^{\text{QNN}} = \frac{1}{|\mathcal{S}_c|}\sum_{\bm{x}_s,y_s\in\mathcal{S}_c}\gamma_s \cdot l(tr[M\cdot |\bm{x}_s,\theta\rangle \langle\bm{x}_s,\theta|,y_s),
\end{equation}
where $|\bm{x}_s,\theta\rangle$ and $M$ are identical to those in Eq. (\ref{eqn:sec:QNN-preliminary}). 

%---------alg: qnn-------------
%------------------------------
\vspace{1em}
\IncMargin{1em}
\begin{algorithm}[H]
\SetKwData{Left}{left}\SetKwData{This}{this}\SetKwData{Up}{up}
\SetKwFunction{Union}{Union}\SetKwFunction{FindCompress}{FindCompress}
\SetKwInOut{Input}{input}\SetKwInOut{Output}{output}

\Input{Coreset $\mathcal{S}_c$, weights $\{\gamma_s\}_{s=1}^k$, maximum epoch number $K_{max}$, learning rate $\eta$}
\Output{trained QNN.}
\BlankLine
Initialize the parametrized quantum circuit $U(\theta)$\;
$l = 1$\;
\While{$l < K_{max}$}
{
\For{$\bm{x}_j,y_j \in \mathcal{S}_c$}
{	
	Estimate the gradient of $R_c$ with respect to $\theta$, $\nabla_{\theta} f(\bm{x}_j,y_j)$ \;
	$\theta_{l+1} \leftarrow \theta_{l} + \eta \cdot \gamma_j\cdot \nabla_{\theta} f(\bm{x}_j,y_j)$;
}
}
return the QNN $tr[M\cdot U(\theta)U(\bm{x}_i)\rho_0 U^\dagger(\theta) U^\dagger(\bm{x}_i)]$.
\caption{QNN with coreset selection}\label{alg:qnn_coreset}\DecMargin{1em}
\end{algorithm}\DecMargin{1em}
\vspace{1em}

We next explain the implementation of a quantum-kernel-based support vector machine (SVM) classifier over coreset. As we employ the coreset $\mathcal{S}_c$ and introduce the weights $\{\gamma\}$, there is a slight difference between original SVM and coreset enhanced SVM, where the Lagrange multipliers $\alpha_i$ in dual problem is upper bounded by $C\cdot \gamma_i$ instead of $C$. Mathematically, we have
\begin{align}\label{eq:dual_svm_coreset}
\max_{\alpha} \quad & \sum_i\alpha_i - \frac{1}{2}\sum_{i,j}\alpha_i\alpha_j y_i y_j K(\bm{x}_i,\bm{x}_j),\\
\textrm{s.t.} \quad & \sum_{i}y_{i}\alpha_{i} = 0, \nonumber \\
 &0\leq \alpha_i \leq C\cdot \gamma_i, i=1,\cdots,k. \nonumber
\end{align}
where $K(\bm{x},\bm{x}')=|\langle \bm{x}|\bm{x}'\rangle|^2$ is the quantum kernel function. The training process is similar to the original SVM and depicted in Algorithm~\ref{alg:qkernel_coreset}.

%---------alg: qkernel-------------
%------------------------------
\vspace{1em}
\IncMargin{1em}
\begin{algorithm}[H]
\SetKwData{Left}{left}\SetKwData{This}{this}\SetKwData{Up}{up}
\SetKwFunction{Union}{Union}\SetKwFunction{FindCompress}{FindCompress}
\SetKwInOut{Input}{input}\SetKwInOut{Output}{output}

\Input{Coreset $\mathcal{S}_c$, weights $\{\gamma_s\}_{s=1}^k$, encoding circuit $U$, regularization parameter $C$}
\Output{SVM with quantum kernel.}
\BlankLine
\For{$\bm{x} \in \mathcal{S}_c$}
{	
	\For{$\bm{x}' \in \mathcal{S}_c$}
	{	
		$K_{\bm{x},\bm{x}'} = |\langle 0|U^\dagger(\bm{x}')|U(\bm{x})|0\rangle|^2$\;
		\tcc{construct kernel matrix $K$}
	}
}
Solve the dual problem in Eq.(\ref{eq:dual_svm_coreset}), get the optimal parameters $\alpha^*$\;
Calculate the parameter $W^*,b^*$;
\begin{equation}
W^* = \sum_i \alpha_i y_i \bm{x}_i,
\end{equation}
\begin{equation}
b^* = y_i - \sum_i \alpha_i y_i K(\bm{x}_i,\bm{x}_j),
\end{equation}
return the quantum kernel based SVM, $f(\bm{x})=\sum_i\alpha_iy_iK(\bm{x},\bm{x}_i)+b^*$.
\caption{Quantum kernel with coreset selection.}\label{alg:qkernel_coreset}\DecMargin{1em}
\end{algorithm}\DecMargin{1em}
\vspace{1em}

\section{Generalization ability of QML models under coreset selection}
%\ymh{Recent studies have indicated that quantum learning models such as QNN may exhibit good generalization even trained over a small number of training examples \cite{caro2021generalization}, which provides us insight into that achieving provable performance only requires the size of examples that scales at $\mathcal{O}(poly(m)/\epsilon)$, where $m$ is the number of trainable parameters of a QML model. Thus, if we employ an appropriate method to prune the data set from data engineering which is independent with specific model design and optimizer enhancement, we not only could build any QML model with provable performance guarantees but also improve the training efficiency. However, it is still remain unkonwn that how the composition and selection of the tailored tiny data set affects the performance of QML model. Therefore, we investigate to provide theoretical analysis of model performance in this section to answer such open question. 
%}

In machine learning, generalization analysis is a crucial theoretical tool to measure how accurately a learning model predicts previously unseen data \cite{mohri2018foundations}. It helps us as a guiding metric to choose the best-performing model when comparing different model variations. As explained in Section \ref{sec:pred-ML}, the purpose of learning algorithms is to find a hypothesis $f$ such that the true risk $R$ on the data distribution $\mathcal{Z}$ is as small as possible. However, it is unlikely to directly estimate the true risk $R$ since the distribution $\mathcal{Z}$ is unknown and an alternative solution is minimizing the empirical risk $R_e$ over the given samples. The generalization error quantifies the gap between the true risk $R$ and empirical risk $R_e$, i.e.,
\begin{equation}
|R - R_e|= \left|\mathbb{E}_{{({{\bm{x}}},y)}\sim p_{\cal{Z}}}[l(f_{\bm{w}}(\bm{x}),y)] - \frac{1}{|\mathcal{S}|}\sum_{({{\bm{x}}}_i,y_i)\in {\cal{S}}_t}l(f_{\bm{w}}({{\bm{x}}}_i) ,y_i)\right|.
\end{equation}
Thus, the true risk $R$ can be upper bounded by its empirical error and the generalization error, i.e., 
\begin{equation}%\label{eq:ge}
    R \leq \underbrace{\left|R - R_e\right|}_{\text{generalization error}} +\underbrace{\left|R_e\right|}_{\text{empirical error}}.
\end{equation}

%We introduce the weight $\gamma$ into the training loss $R_c$ over coreset to approximate $R_e$. Being different from the empirical loss $R_e$ on the full data set, the weighted risk on the coreset $R_c$ is defined as,
%\begin{equation}\label{eq:rc}
%R_c = \frac{1}{|{\cal{S}}_c|}\sum_{{{\bm{x}}}_s,{y}_s\in {\cal{S}}_c} \gamma_s \cdot l(f_{\bm{w}}({{\bm{x}}}_s),y_s)
%\end{equation} 
%Here $\gamma_s$ is the weight of coreset element $\bm{x}_s$.

%When we explore how the coreset affects the performance, the critical part is how to theoretically analyze the coreset error that measures the approximate error between the full training set and coreset. 

Since the empirical error, namely the training loss, is close to zero in general, we can only consider the upper bound of the generalization error. 
Thus, a natural question is whether the coreset selection could provide a tighter generalization bound compared to random sampling under the same pruned training size. To answer this question, we need first define the empirical error $R_r$ on a sub-training set $\mathcal{S}_r$ that is generated by random sampling from full data set $\mathcal{S}$ as, 
\begin{equation}
    R_r = \frac{1}{|\mathcal{S}_r|}\sum_{\bm{x}_i,y_i\in\mathcal{S}_r}l(f_{\bm{w}}(\bm{x}_i),y_i),
\end{equation}
where $|\mathcal{S}_r|=N_r$ is the size of $\mathcal{S}_r$. Hence, the generalization error $G_r$ over $\mathcal{S}_r$ is represented as,
\begin{equation}
    G_r = \left|R - R_r\right|=\left|\mathbb{E}_{{({{\bm{x}}},y)}\sim p_{\cal{Z}}}[l(f_{\bm{w}}(\bm{x}),y)] - \frac{1}{|\mathcal{S}_r|}\sum_{\bm{x}_i,y_i\in\mathcal{S}_r}l(f_{\bm{w}}(\bm{x}_i),y_i)\right|.
\end{equation}
According to previous studies on the generalization analysis of QNNs and quantum kernels \cite{caro2022generalization, Huang2021PowerOD}, with probability at least $\ge 1-\delta$ over $\mathcal{S}_r$, we have the generalization error of QNNs,
\begin{equation}
    G_r^{\text{QNN}} \leq \mathcal{O}\left(\sqrt{\frac{m\log{(m)}}{N_r}}+\sqrt{\frac{\log{(1/\delta)}}{N_r}}\right),
\end{equation}
where $m$ is the trainable parameters of QNNs, and the generalization error of quantum kernel,
\begin{equation}
    G_r^{\text{qkernel}} \leq \mathcal{O}\left(\sqrt{\frac{\ceil{\|\bm{w}\|}^2}{N_r}}+\sqrt{\frac{\log{(4/\delta)}}{N_r}}\right).
\end{equation}
To bound the generalization error $G_c$ on coreset $\mathcal{S}_c$, we similarly define the empirical error $R_c$ on the coreset $\mathcal{S}_c$ as,
\begin{equation}\label{eq:gamma}
    R_c = \frac{1}{|\mathcal{S}_c|}\sum_{\bm{x}_s,y_s\in\mathcal{S}_r}\gamma_s\cdot l(f_{\bm{w}}(\bm{x}_s),y_s),
\end{equation}
where $|\mathcal{S}_c|=N_c$ is the size of $\mathcal{S}_c$ and $\gamma_s$ is the weight for each coreset example $\bm{x}_s$ that is used to make $R_c$ approximate to $R_e$ over full data set $\mathcal{S}$. Then, we can present $G_c$ as,
\begin{equation}
    G_c = |R - R_c|=\left|\mathbb{E}_{{({{\bm{x}}},y)}\sim p_{\cal{Z}}}[l(f_{\bm{w}}(\bm{x}),y)] - \frac{1}{|\mathcal{S}_c|}\sum_{\bm{x}_s,y_s\in\mathcal{S}_c}l(f_{\bm{w}}(\bm{x}_s),y_s)\right|.
\end{equation}
Based on the triangle inequality, $G_c$ can be bounded by,
\begin{equation}
    G_c = \left|R - R_c\right|\leq \underbrace{|R - R_e|}_{\text{generalization error}} + \underbrace{|R_e - R_c|}_{\text{coreset error}}.
\end{equation}
Here, we name the second term, i.e. $|R_e - R_c|$, as the coreset error that describes the gap between the empirical loss on the full data set and the coreset.   As the bound of the generalization error term, i.e. $|R - R_e|$, is given by previous works \cite{caro2022generalization,Huang2021PowerOD}, we can only focus on the bound of coreset error. 

By analyzing the coreset error, we first provide the generalization error bound of QNNs. 
\begin{theorem}\label{thm:qnn-coreset}{\bf{(Generalization error bound of QNN on coreset)}}
Given sample set $\mathcal{S} = \{{{\bm{x}}}_i,y_i\}_{i=1}^{N_t}$ are $i.i.d$ randomly drawn from distribution $\mathcal{Z}$, $\mathcal{S}_c$ is $\delta_c$ cover of $\mathcal{S}$. Assume
there is $\lambda_\eta$-Lipshcitz continuous class-specific regression function $\eta(\bm{x}) = p(y = c|\bm{x})$, and the loss $l(f_{\bm{w}}(\bm{x}_s,y_s))$ over coreset $\mathcal{S}_c$ is zero and bounded by $L$. We have the following upper bound for the generalization error of QNNs trained on coreset with probability $1-\delta$,
% \begin{equation}\label{eq:gen_qnn}
% G_c^{\text{QNN}}\in \mathcal{O}(\sqrt{\frac{m \log(m)}{N_t}}+\sqrt{\frac{\log(1/\delta)}{N_t}}+\delta_c)
% \end{equation}
\begin{equation}\label{eq:gen_qnn}
G_c^{\text{QNN}}\leq \mathcal{O}\left(\sqrt{\frac{m \log(m)}{N_t}}+\sqrt{\frac{\log(1/\delta)}{N_t}}+\delta_c (\lambda_\eta L n_c + d\sqrt{d_{\bm{x}}}\max_j|\bm{w}_j|\left|M\right|(\left|M\right|+\max|y|)\right),
\end{equation}
where $m$ is the number of the parameters in QNN, $d$ is the number of QNN layers, $n_c$ is the number of classes, $M$ is the measurement operator, $d_{\bm{x}}$ is the feature dimension. 
\end{theorem}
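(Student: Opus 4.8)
The plan is to start from the triangle-inequality split already recorded above, $G_c^{\text{QNN}} \le |R - R_e| + |R_e - R_c|$, and treat the two pieces separately. The generalization term $|R - R_e|$ is not specific to coresets; it is controlled directly by the uniform-convergence bound of \cite{caro2022generalization} applied to the full sample $\mathcal{S}$ of size $N_t$, which contributes the first two summands $\sqrt{m\log(m)/N_t}$ and $\sqrt{\log(1/\delta)/N_t}$ with probability $1-\delta$. All of the coreset-specific work therefore goes into the second term, the \emph{coreset error} $|R_e - R_c|$, which I would bound deterministically by $\delta_c$ times the bracketed constant in \eqref{eq:gen_qnn}.

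For the coreset error, the key structural fact is that $\mathcal{S}_c$ is a $\delta_c$-cover of $\mathcal{S}$: every sample $\bm{x}_i$ lies in the ball of radius $\delta_c$ about some center $\bm{x}_s$, and the weight $\gamma_s$ counts exactly the same-class samples in that ball. I would use this to regroup $R_e$ as a weighted sum over the centers matching the form of $R_c$, and then compare, sample by sample, the loss $l(f_{\bm{w}}(\bm{x}_i),y_i)$ against the center loss $l(f_{\bm{w}}(\bm{x}_s),y_s)$. Each per-sample discrepancy I would split into a \emph{feature-shift} part $|l(f_{\bm{w}}(\bm{x}_i),y_i)-l(f_{\bm{w}}(\bm{x}_s),y_i)|$ (same label, input moved by at most $\delta_c$) and a \emph{label-shift} part accounting for samples in the ball whose label differs from $y_s$.

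The feature-shift part is handled by a Lipschitz estimate for the QNN loss in the input $\bm{x}$. With the mean-square loss $l(a,b)=(a-b)^2$ the chain rule gives $|\partial_{x_k} l| \le 2|f_{\bm{w}}(\bm{x})-y|\,|\partial_{x_k} f_{\bm{w}}(\bm{x})|$, and since $|f_{\bm{w}}(\bm{x})| = |\mathrm{tr}[M|\bm{x},\theta\rangle\langle\bm{x},\theta|]| \le |M|$ the prefactor is $(|M|+\max|y|)$. Differentiating the data-reuploading state $|\bm{x},\theta\rangle$, each feature $x_k$ enters the $d$ encoding layers, each generator is bounded via the parameter magnitude $\max_j|\bm{w}_j|$, and the measurement contributes $|M|$, so the coordinate-wise partials scale as $d\,\max_j|\bm{w}_j|\,|M|$; summing the coordinate contributions over the $d_{\bm{x}}$ features and converting the $\ell_1$ bound to the Euclidean covering radius introduces a $\sqrt{d_{\bm{x}}}$ factor. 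Multiplying by $\|\bm{x}_i-\bm{x}_s\|\le\delta_c$ then produces the second bracketed term $\delta_c\, d\sqrt{d_{\bm{x}}}\max_j|\bm{w}_j||M|(|M|+\max|y|)$.

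The label-shift part is where I expect the main obstacle, since samples inside a single ball need not carry the center's label. Here I would invoke the $\lambda_\eta$-Lipschitz assumption on the class-posterior $\eta(\bm{x})=p(y=c|\bm{x})$: for any $\bm{x}_i$ with $\|\bm{x}_i-\bm{x}_s\|\le\delta_c$ the posteriors obey $|\eta(\bm{x}_i)-\eta(\bm{x}_s)|\le\lambda_\eta\delta_c$, so the expected fraction of mislabeled samples per ball is at most $\lambda_\eta\delta_c$ in each class. Each mismatch changes the loss by at most $L$, and summing over the $n_c$ classes yields $\delta_c\,\lambda_\eta L n_c$. Adding the feature-shift and label-shift contributions bounds $|R_e-R_c|$ by $\delta_c$ times the full bracket, and combining with the generalization term recovers \eqref{eq:gen_qnn}. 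The two delicate points to get right are the bookkeeping of the weights $\gamma_s$ (and their normalization) so that the regrouped $R_e$ lines up with $R_c$, and controlling the label-shift contribution in expectation via the Lipschitz posterior rather than sample-by-sample.
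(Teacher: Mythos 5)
Your proposal follows essentially the same route as the paper's proof: the identical triangle-inequality split into a generalization term (handled by the prior uniform-convergence bound on the full set of size $N_t$) plus a coreset error, with the coreset error bounded by a Lipschitz-in-feature-space estimate of the QNN loss (the paper's Lemma on the data-reuploading circuit, yielding the factor $d\sqrt{d_{\bm{x}}}\max_j|\bm{w}_j|\,|M|(|M|+\max|y|)$) together with the $\lambda_\eta$-Lipschitz class posterior (yielding the factor $\lambda_\eta L n_c$). The only differences are bookkeeping-level: the paper compares each sample to the nearest center \emph{of each class} (within $3\delta_c$) via a three-term decomposition whose last term vanishes under the zero-weighted-coreset-loss assumption, and it closes the argument with a Hoeffding term (absorbed into the big-$\mathcal{O}$) rather than a fully deterministic coreset-error bound --- precisely the two delicate points you flag yourself.
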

For the coreset error $|R_e - R_c|$, we assume the training error on coreset is equal to zero, thus it becomes the average error over the entire data set which can be bound with the radius $\delta_c$ determined by the $k$-center covering problem shown in Fig. \ref{fig:kcp}, which is related to the data prune rate $\zeta=|\mathcal{S}_c|/|\mathcal{S}|$. Combining the generalization bound on the full data set and the bound of risk gap between the full data set and coreset together, we have the generalization error of QNN on coreset is mainly bounded by two terms, i.e. $\mathcal{O}((\sqrt{m \log(m)/N_t}+\delta_c)$. Thus, the $G_c^{\text{QNN}}$ will give a tighter bound than $G_r^{\text{QNN}}$ when we carefully choose the data prune rate $\zeta$. It is clear that $G_c^{\text{QNN}}$ gives the tighter bound on the first term since $N_r < N_t$. For the second term, as $\delta_c$ is related to $\zeta$, a low $\zeta$ will cause $\delta_c$ to become large leading to a high approximation error. Conversely, a high $\zeta$ will decrease the approximation error but the acceleration of training will disappear because $N_c$ is approximate to $N_t$.

We next provide the generalization error bound of quantum kernels under the coreset. 
\begin{theorem}\label{thm:qkernel-coreset}{\bf{(Generalization error bound of quantum kernels on coreset.)}}
Given sample set $\mathcal{S} = \{\bm{x}_i,y_i\}_{i=1}^{N_t}$ are $i.i.d$ randomly drawn from distribution $\mathcal{Z}$, $\mathcal{S}_c$ is $\delta_c$ cover of $\mathcal{S}$. Assume there is $\lambda_\eta$-Lipshcitz continuous class-specific regression function $\eta(\bm{x}) = p(y = c|\bm{x})$, and the loss $l(f_{\bm{w}}(\bm{x}_s,y_s))$ over coreset $\mathcal{S}_c$ is zero and bounded by $L$. We have the following upper bound for the generalization error of SVM with quantum kernel trained on coreset with probability $1-\delta$,
\begin{equation}\label{eq:gen_qkernel}
G_c^{\text{qkernel}}\leq \mathcal{O}\left(\sqrt{\frac{\ceil{\|\bm{w}\|}^2}{N_t}}+\sqrt{\frac{\log{(4/\delta)}}{N_t}}+\delta_c (\lambda_\eta L n_c + N_c\sqrt{d_{\bm{x}}}\max_j|\bm{w}_j|\cdot(1+(N_q-1)r))\right),
\end{equation}
where $n_c$ is the number of classes, $d_{\bm{x}}$ is the feature dimension of $\bm{x}$, $N_q$ is the size of mapped quantum state $|\bm{x}\rangle$, r is the maximum value of feature $\bm{x}$. 
\end{theorem}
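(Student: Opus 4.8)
\section*{Proof proposal for Theorem~\ref{thm:qkernel-coreset}}

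The plan is to mirror the structure used for Theorem~\ref{thm:qnn-coreset}, since the only genuine difference between the QNN and quantum-kernel bounds is the Lipschitz constant of the trained hypothesis. First I would apply the triangle inequality $G_c^{\text{qkernel}} = |R - R_c| \le |R - R_e| + |R_e - R_c|$ and bound the first term, the true generalization error $|R - R_e|$, directly by the existing quantum-kernel uniform-convergence result of Ref.~\cite{Huang2021PowerOD} evaluated on the full sample of size $N_t$. This immediately supplies the first two summands $\mathcal{O}(\sqrt{\ceil{\norm{\bm{w}}}^2/N_t} + \sqrt{\log(4/\delta)/N_t})$ of Eq.~\eqref{eq:gen_qkernel}, so all remaining work concerns the coreset error $|R_e - R_c|$.

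For the coreset error I would invoke the hypothesis that the weighted loss on the coreset vanishes, so $R_c = 0$ and $|R_e - R_c| = R_e = \frac{1}{N_t}\sum_i l(f_{\bm{w}}(\bm{x}_i), y_i)$. By the $\delta_c$-covering property of $\mathcal{S}_c$, every training point $\bm{x}_i$ of class $c$ admits a center $\bm{x}_s \in \mathcal{S}_c^{(c)}$ with $d(\bm{x}_i,\bm{x}_s)\le \delta_c$ and $l(f_{\bm{w}}(\bm{x}_s),y_s)=0$. Writing the per-point loss as the telescoping sum $[l(f_{\bm{w}}(\bm{x}_i),y_i) - l(f_{\bm{w}}(\bm{x}_s),y_i)] + [l(f_{\bm{w}}(\bm{x}_s),y_i) - l(f_{\bm{w}}(\bm{x}_s),y_s)]$ splits it into a \emph{prediction-variation} part and a \emph{label-variation} part. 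The label-variation part, which captures residual label disagreement inside a covering ball, is handled exactly as in Theorem~\ref{thm:qnn-coreset}: since the class-specific regression function $\eta(\bm{x}) = p(y=c\mid\bm{x})$ is $\lambda_\eta$-Lipschitz, the class probabilities vary by at most $\lambda_\eta\delta_c$ across a ball, and summing over the $n_c$ classes against the loss ceiling $L$ yields the term $\delta_c\lambda_\eta L n_c$.

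The remaining and genuinely new step is to bound the prediction-variation part, which requires a Lipschitz estimate of the trained SVM decision function $f(\bm{x}) = \sum_i \alpha_i y_i K(\bm{x},\bm{x}_i) + b^*$. I would write $|f(\bm{x}_i) - f(\bm{x}_s)| \le \sum_j |\alpha_j|\,|K(\bm{x}_i,\bm{x}_j) - K(\bm{x}_s,\bm{x}_j)|$, bound the number of active (support-vector) terms by the coreset size $N_c$ and each coefficient by $\max_j|\bm{w}_j|$, and control each kernel difference through the gradient of $K(\bm{x},\bm{x}') = |\langle\bm{x}'|\bm{x}\rangle|^2$. Differentiating the feature-map generator $\sum_i \bm{x}_i\sigma_i^Z + \sum_{i,j}(\pi-\bm{x}_i)(\pi-\bm{x}_j)\sigma_i^Z\sigma_j^Z$ with respect to $\bm{x}_k$ produces $\sigma_k^Z$ together with at most $N_q-1$ entangling terms $(\pi-\bm{x}_j)\sigma_k^Z\sigma_j^Z$, each of spectral norm bounded via the feature magnitude $r$; this gives the single-feature Lipschitz factor $1+(N_q-1)r$, while passing from the per-coordinate estimate to the Euclidean displacement $\norm{\bm{x}_i-\bm{x}_s}\le\delta_c$ over $d_{\bm{x}}$ features introduces the $\sqrt{d_{\bm{x}}}$ factor. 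Collecting terms gives the prediction-variation contribution $\delta_c N_c\sqrt{d_{\bm{x}}}\max_j|\bm{w}_j|(1+(N_q-1)r)$, and adding it to the label-variation and generalization terms reproduces Eq.~\eqref{eq:gen_qkernel}.

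The main obstacle is the kernel-gradient estimate in the last step: unlike the single-qubit-dominated QNN case, the entangling term $(\pi-\bm{x}_i)(\pi-\bm{x}_j)\sigma_i^Z\sigma_j^Z$ couples each feature to all $N_q-1$ other qubits, so differentiating $K = |\langle\bm{x}'|\bm{x}\rangle|^2$ and bounding the resulting overlap/commutator expression uniformly in $\bm{x}$ is the delicate calculation that generates the $(N_q-1)r$ dependence. Care is also needed so that the number of support vectors and the multiplier cap $C\gamma_i$ from Eq.~\eqref{eq:dual_svm_coreset} are absorbed into $N_c$ and $\max_j|\bm{w}_j|$ consistently with the big-$\mathcal{O}$ accounting.
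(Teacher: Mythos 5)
Your proposal follows essentially the same route as the paper: the same triangle-inequality split $|R-R_c|\le |R-R_e|+|R_e-R_c|$, with the first term bounded by the quantum-kernel generalization result of Ref.~\cite{Huang2021PowerOD} and the coreset error handled by the covering argument of Theorem~\ref{thm:qnn-coreset}, swapping in a kernel-specific Lipschitz constant; your kernel-gradient calculation (the $\sqrt{d_{\bm{x}}}$, $N_c\max_j|\bm{w}_j|$, and $1+(N_q-1)r$ factors from differentiating the feature map) is precisely the content of the paper's Lemma~\ref{lemma:kernel}. The only cosmetic deviation is that you differentiate the $(\pi-\bm{x}_i)(\pi-\bm{x}_j)$ encoding from the preliminaries, whereas the lemma uses the $\bm{x}_j\bm{x}_{j'}$ form (an inconsistency already present in the paper), which does not change the structure of the argument.
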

Similar to the analysis of the generalization error of QNNs on coreset, the generalization error of SVM with a quantum kernel is also mainly bounded by two terms, i.e. $\mathcal{O}(\sqrt{\frac{\ceil{\|\bm{w}\|}^2}{N_t}}+\delta_c)$ that indicates we have the similar results like $G_c^{\text{QNN}}$. Here, when we employ coreset selection to reduce the size of training examples, we will reduce the complexity of getting kernel matrix, that provides $\mathcal{O}(k^2)$ speedup where $k 
 = N_t/N_c$, meanwhile also having provable generalization guarantee.\\
It should be pointed out that it is not easy to propose a universal trick for selecting an appropriate data prune rate $\zeta$ that determines the $\delta_c$ to achieve a good generalization performance. Because it is related to the distribution of the given training set $\mathcal{S}$ and unknown true data distribution $\mathcal{Z}$. Nevertheless, we give some numerical experiments on various data and models, the results not only support the analytical findings but also might provide some practical advice for the selection of $\zeta$. Consequently, if we carefully choose the data prune rate and the size of the training set to be scaled at least quasi-linearly with the number of gates employed, we are able to accelerate the quantum machine learning model with a performance guarantee. These results provide effective and practical guidance on achieving accurate and reliable performance with a reasonable model complexity and sample complexity.

%  Given $N_t$ number of training samples, the parameters of the model could be optimized to $\bm{w^*}$. When exposed to the full distribution, the generalization error of the model would be bounded by \cite{Huang2021PowerOD},  i.e. $\|R-R_e\| \in
% \mathcal{O} (\sqrt{\frac{\sqrt{\langle \bm{w^*}, \bm{w^*} \rangle }}{N_t}}+\sqrt{\frac{\log(4/\delta)}{N_t}})$. If we utilize quantum kernel to encode each training point to a $N_q$ qubit system and further condense these $N_t$ samples to a coreset with $k_c$ covering samples and $\delta_c$ radius. The true risk of the model could be bounded (assuming up-to-quadratic terms in Hamiltonian) as shown in Eq.(\ref{eq:gen_qkernel}).

% \yxd{[more explanations.]}

% \yxd{[Address the complementary roles with other acceleration methods.]}

%-----------------------------------------------------
%-----------------Numerics -----------------------
%-----------------------------------------------------
\section{Numerical results}
In this section, we conduct extensive numerical simulations to explore the performance of the proposed coreset selection method. Specifically, we employ our proposal to accomplish three learning tasks, which are synthetic data classification, quantum correlation identification, and quantum compiling. 

\subsection{Synthetic data classification by quantum kernels}
We first utilize the quantum kernel to classify a synthetic dataset with coreset. The construction rule of the synthetic dataset mainly follows Ref.~\cite{Huang2021PowerOD}. Specifically, given the data set $\{{\bm{x}}_i,y_i\}_{i=1}^N$ that independently sampled from the distribution $\mathcal{X}$, the corresponding labels are modified according to the maximized geometric difference given by
\begin{equation}\label{eq:geom}
\max_{y\in\mathbb{R}^N}\frac{\sum_{i=1}^N\sum_{j=1}^N (K^{Q})^{-1}_{ij}y_iy_j}{\sum_{i=1}^N\sum_{j=1}^N (K^{C})^{-1}_{ij}y_iy_j},
\end{equation}
where $K^Q$ and $K^C$ denote the quantum and classical kernel respectively. The optimal solution of Eq. (\ref{eq:geom}) yields the modified labels ${\bf{y}}^*$ such that maximize the geometric difference.
\begin{equation}
{\bf{y}}^*= \text{sign}(\sqrt{K^Q}v),
\end{equation}
where $v$ is the eigenvector of $\sqrt{K^Q}(K^C)^{-1}\sqrt{K^Q}$ with the maximum eigenvalue, and $\text{sign}(\bf{z})$ is the element-wise function such that set the $i$-th element as $+1$ if $z_i > median({\bf{z}})$ otherwise set as $-1$. As proved in Ref.~\cite{Huang2021PowerOD}, quantum kernels can achieve quantum advantages when learning this dataset.

\begin{figure}[H]
  \centering
  \includegraphics[width=15cm,height=5cm]{./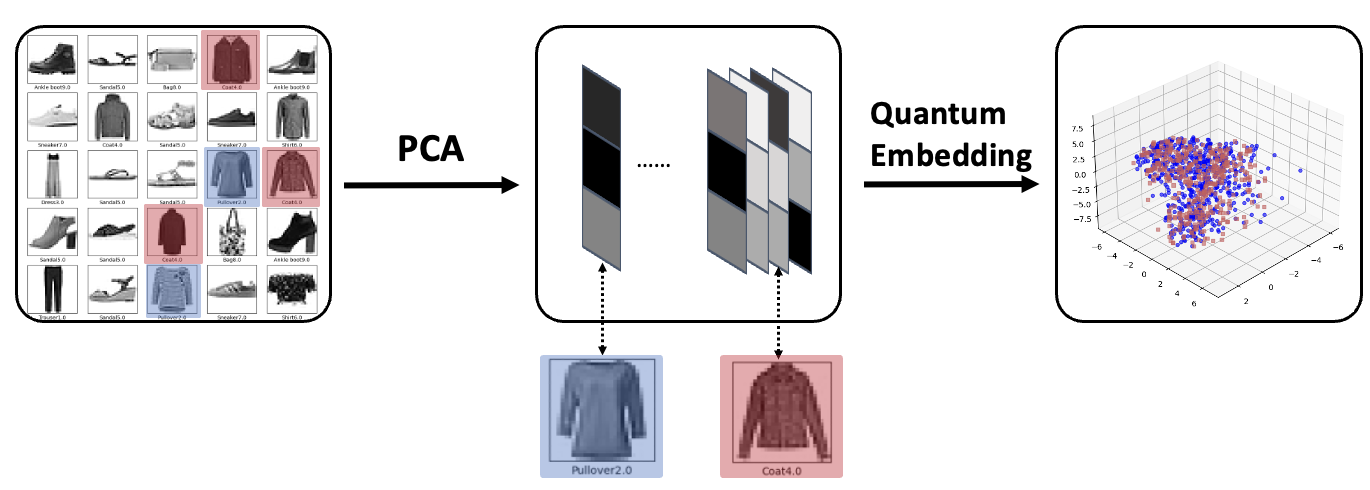}
  \caption{Illustration of the employed synthetic dataset adapted from the fashion-MINIST \cite{xiao2017fashion}. We use principal component analysis to get the low-dimensional representation, then embed the reduced data into quantum Hilbert space. In the end, we relabel the data according to maximizing the geometric difference between classical and quantum kernels in Eq.~(\ref{eq:geom}).}
  \label{fig:synthetic}
\end{figure}

An illustration of the synthetic dataset construction is shown in Fig.~\ref{fig:synthetic}. Concretely, in our numerical simulations, the synthetic dataset is based on fashion-MNIST \cite{xiao2017fashion}. As the dimension of vectorized data of the fashion-MNIST is too high for NISQ device, we preprocess it as the low-dimension representation by principle component analysis and then relabel the class of each data point according to Eq.~(\ref{eq:geom}). Besides, the element of classical kernel $K^C_{ij}$ is given by the radial basis function kernel $K^C_{ij}=\exp(-\frac{||{\bm{x}}_i - {\bm{x}}_j||^2}{2\sigma^2})$. The quantum kernel $K^Q_{ij}$ is generated through encoding the data points into $N_q$-qubit Hilbert space by a quantum circuit $U_e$,
\begin{equation}
K^Q_{ij} = tr(\rho({\bm{x}}_i)\rho({\bm{x}}_j)),
\end{equation}
where $\rho({\bm{x}}) = U_e({\bm{x}})|0\rangle\langle0|U^{\dagger}_e({\bm{x}})$. We can further assume the following form of $U_{{\bm{x}}}$ implemented through quantum gates in an $N_q$-qubit circuit: $U_{{\bm{x}}}=(U({{\bm{x}}})H^{\otimes N_q})^2 |0\rangle^{\otimes N_q}$, where $U({{\bm{x}}})=\exp(\sum_{j=1}^{N_q}{{\bm{x}}}_j\sigma^Z_j+\sum_{j,j'=1}^{N_q}{{\bm{x}}}_j{{\bm{x}}}_{j'}\sigma^Z_j\sigma^Z_{j'})$.

%Then, we can use the coreset to train the learning model by minimizing the empirical loss
%\begin{equation}
%L(\bm{\theta},\mathcal{S}_c)=\frac{1}{2N_c}\sum_{j=1}^{N_c}(tr(O\cdot U(\bm{\theta})U_e({\bm{x}}_j)\rho_o U^{\dagger}_e({\bm{x}}_j)U^{\dagger}(\bm{\theta}))- y^*_j)^2
%\end{equation}
%where the variational circuit $U(\bm{\theta})$ and observable $O$ construct a variational classifier. Here, we apply gradient descent and iteratively update the parameter $\bm{\theta}$ to train the model.
\begin{figure}[H]  
  \centering
  \includegraphics[width=15cm,height=4cm]{./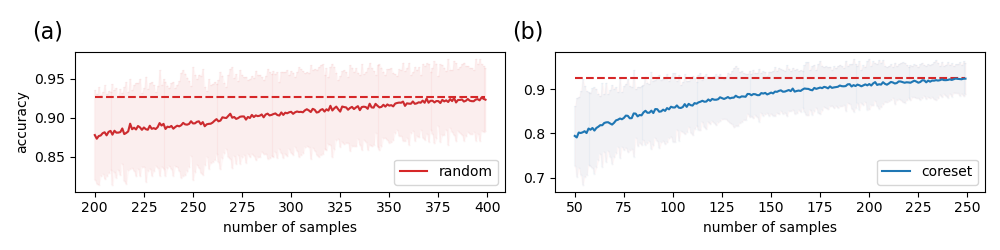}
  \caption{The comparison between the proposed model with random sampling and coreset selection. It shows the average performance of classification of synthetic data where the solid line is the test accuracy of the models that trained over coreset and random sampling. The shaded area refers to the range of the test accuracy. The red dashed line denotes the maximum accuracy achieved by random sampling.}
  \label{fig:qkernel}
\end{figure}

Once the synthetic data set was prepared, we conducted experiments on training sets with various sizes and subsequently tested on 200 unseen examples to get the test accuracy.  Instead of independently and randomly choosing training data from entire set $\mathcal{S}$, we first solve the $k$-center problem over set $\mathcal{S}$ with $1000$ examples that are equivalent to form the coreset $\mathcal{S}_c$. 
In Fig.~\ref{fig:qkernel}, we show the comparison of the classification performance under these settings. They respectively depict the correlation between the size of the training set, obtained using random sampling and coreset selection, and the corresponding test accuracy. For experiments involving the same number of training examples, we carried out five independent trials, each employing randomized initialization, resulting in the shaded area as shown in Fig.~\ref{fig:qkernel}. The average test performance of these trials is plotted as a solid line. We also highlight the best test accuracy of random sampling as the red dash line. From the results shown in Fig.~\ref{fig:qkernel}, when the models are trained on a set with 250 examples (i.e., $\zeta = 0.25$), the test accuracy of those trained on the data points obtained by coreset selection is higher than that of the model via random sampling. Besides, when achieving the same test accuracy, the model only needs 250 coreset examples while requiring 400 randomly sampled examples. They support our analytical results, the QNNs trained on coreset have better generalization performance than those with randomly picked training data under appropriate data prune rate. The coreset-enhanced classifier significantly improves the training efficiency, achieving competitive performance while only utilizing approximately $50\%$ of the training examples compared to random sampling.

\subsection{Correlation identification by QNNs}\label{sec:quantum correlation}
Non-classical correlation plays a core role in quantum information and quantum computation \cite{gross2009most}. %, such as entanglement viewed as the source of providing the quantum exponential speedup over classical methods \cite{gross2009most}. 
Nevertheless, identifying the non-classical correlation of a given quantum state is a challenging task. Ref.~\cite{yang2019experimental} explored classifying non-classical correlation experimentally with machine learning techniques.  Consider the family of quantum states characterized by $p$ and $\theta$ with the following form,
\begin{equation}
\rho_{AB}(p,\theta) = p|\psi_\theta\rangle\langle\psi_\theta| + (1-p)\frac{\mathbb{I}}{2}\otimes tr_A(|\psi_\theta\rangle\langle\psi_\theta|),
\end{equation}
where the $p\in (0,1)$, $\theta \in (0,2\pi)$ and state $|\psi_\theta\rangle = \cos(\theta)|00\rangle + \sin(\theta)|11\rangle$. There are following rules to determine the non-classical correlation of quantum states $\rho_{AB}$ including separable, entangled, one-way steerable, and non-local. 
\begin{enumerate}

\item According to PPT criterion, the states are \textit{separable} when $p<\frac{1}{3}$ otherwise they are entangled.

\item When $\frac{1}{\sqrt{2}} < p < \frac{1}{\sqrt{1+\sin^2(2\theta)}}$, the quantum state is \textit{one-way steerable}. 

\item When $p > \frac{1}{\sqrt{1+\sin^2(2\theta)}}$, the state is \textit{non-local}.

\end{enumerate}
Therefore, to identify the non-local correlation of a given state under the learning framework, we can label the quantum states with different non-classical correlations according to the above criteria and create the data set $\{\rho^{AB}_{j}, y_j\}_{j=1}^N$ where $y_i$ represents the type of correlation, i.e. separable, entangled, one-way steerable, non-local.

\begin{figure*}
    \centering
%    \setkeys{Gin}{width=0.5\textwidth}
\subfloat[quantum states]
          {\includegraphics[width=0.3\textwidth]{./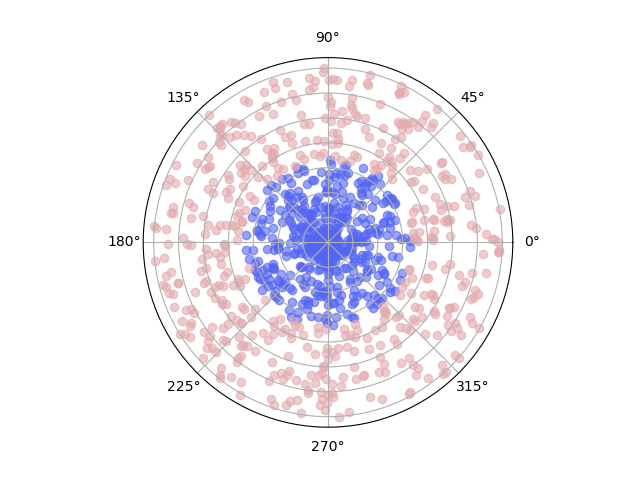}
          \label{fig:correlation}}
%    \hfill
\subfloat[test accuracy of random sampling and coreset ]
%          {\includegraphics[width=10cm,height=3cm]{./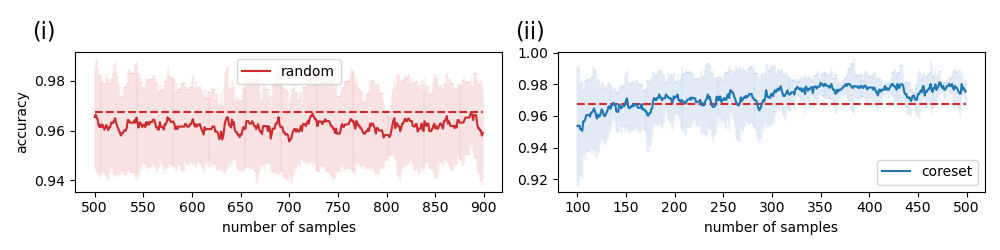}}
		{\includegraphics[width=0.7\textwidth]{./figures/res_qnn.png}
		\label{fig:qnn_correlation}}

\caption{\small{ Results related to correlation identification. (a) The states are represented as dots in polar plots. The radius of the polar plot represents the parameter $p$ that varies from $0$ to $1$. The phase stands for the parameter $\theta$ that varies from $0$ to $2\pi$. The blue and orange dots indicate the separable and
entangled states separately. (b) The comparison of test accuracy between the classifier with random sampling and coreset under different sizes of training set. }}
\label{fig:qnn_correlation}
\end{figure*}

To further enhance the methods, the fewer training samples the less run time, we apply the coreset selection to this learning task. For classifying the quantum correlation, we uniformly pick the parameters $p \in (0,1)$ and $\theta \in (0,2\pi)$ to generate the 1000 quantum states as the full training set as shown in the subfigure (a) of Fig.~\ref{fig:qnn_correlation}. Then label the class of correlation in terms of the criterion listed above. Here, we continue with the same experimental setup as the previous section. 
where we examine the test accuracy of the model, trained over various sample sizes, on 200 unseen random samples. 

The red dash lines in subfigure (i) and (ii) denote the maximum test accuracy attained by the classifier through random sampling, with a maximum of 900 training samples. When utilizing the coreset method to prune the dataset, for sample sizes exceeding 180, the averaged test accuracy obtained by coreset is almost higher than the best accuracy achieved by random sampling.  It indicates that when $\zeta \ge 0.18$, the coreset selection could provide better performance compared to training on random sampling data, which matches our theoretical findings.

\subsection{Quantum compiling by QNN}
Compiling a unitary into a sequence of gates is a challenging and high-profile task for NISQ quantum devices. With the limitation of current NISQ hardware, compiling a unitary not only should take account of the function but also consider the connectivity and depth of the output circuit. Recently, various methods for quantum compiling have been proposed under the framework of variational quantum algorithms \cite{khatri2019quantum, caro2022generalization, bilkis2021semi}. In general, these algorithms consider the compiling task as an optimization problem on a given compact quantum gates set which consists of fixed and parametrized quantum gates. The goal is to optimize the structure and gate parameters such that the proposed quantum circuit approximates to the given unitary. 

Here, we consider the method that tackles the compiling task by a quantum machine learning protocol. Given an $n$-qubit target unitary $U$, the training data consists of random input states and their corresponding output when $U$ applied on, i.e. $\{|\psi_j\rangle, U|\psi_j\rangle\}_{j=1}^{N}$. To approximate the target unitary $U$, one can simply minimize the empirical loss of the squared trace distance between target states $U|\psi_j\rangle$ and parametrized output states $V(\bm{\theta})|\psi_j\rangle$ and over randomly sampled states in Hilbert space. Concretely, we randomly pick the quantum gates from the gate pool consisting of single parametrized gates $\{R_x, R_y, R_z\}$ and CNOT gate to build the target quantum circuits $U$. Then set the input as the random state $|\psi_i\rangle$ in Hilbert space and get training pairs $\{|\psi_j\rangle,U|\psi_j\rangle\}_{j=1}^{1000}$. In subfigure (b) of Fig.~\ref{fig:qnn_compile}, we present the comparison between the models trained on random sampling and coreset with various data sizes and pruning ratios. We separately estimate the performance of the model with $\zeta=\{0.8, 0.4, 0.6, 0.2\}$. Each data point with a different color on the plot corresponds to different $\zeta$. For the compiling task on a specific unitary, we found that 1000 training examples are redundant, and the effective size of training data scales linearly with the system size which is similar to what previous work found \cite{caro2022generalization}. When given around 20 examples, it is sufficient for training the variational compiler to achieve a reasonable performance for a 6-qubit system. Since the randomly sampled training points are likely to be uniformly located in Hilbert space and the proposed coreset selecting approach also uses $k$-centers to uniformly cover the entire set, the compilers trained on coreset with the different $\zeta$ have similar performance compared to those trained on the random-sampled set.

\begin{figure*}
    \centering
\subfloat[target ansatz]
          {\includegraphics[width=0.4\textwidth]{./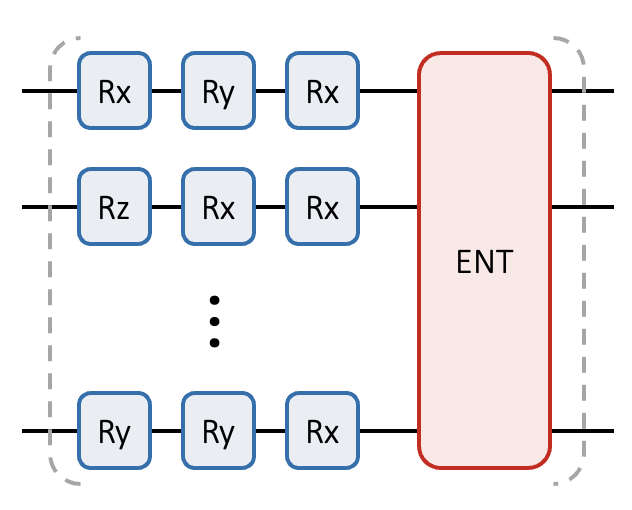}}
%    \hfill
\subfloat[performance of random sampling and coreset ]
%          {\includegraphics[width=10cm,height=3cm]{./figures/res_qnn.png}}
		{\includegraphics[width=0.6\textwidth]{./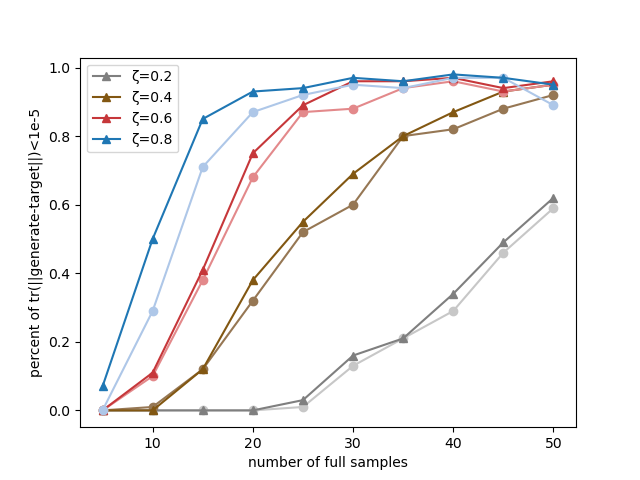}}

\caption{\small{Results of unitary compiling. (a) The target unitary is exploited in numerical simulations. (b) The performance comparison between the different prune ratios $\zeta$ of coreset on compiling task. The solid dark and light lines represent the models trained on the coreset and random samples. The vertical axis represents the percentage of states in the test set for which the trace distance to their corresponding targets is below $10^{-5}$. }}
\label{fig:qnn_compile}
\end{figure*}

%-----------------------------------------------------
%-----------------Discussion -----------------------
%-----------------------------------------------------
\section{Discussion}

In this work, we investigate enhancing QML model from the data engineering perspective and attempt to alleviate a practical problem when handling a large volume of data samples. In particular, we consider the coreset construction as a $k$-set cover problem and then analyze the generalization performance of QML models on coreset. Our investigation of various learning scenarios, including on classification of synthetic data, identification of non-classical correlations in quantum states, and quantum circuit compilation, confirm the extreme improvements in the effectiveness of our proposal. 

Our research findings highlight the considerable enhancement in model training achieved through the utilization of the coreset method. Data pruning contributes to improved training efficiency. Besides, it also helps filter out noise data, thereby enhancing model performance. 
It's evident that selecting a sparser coreset enforces a more rigorous upper bound on the count of trainable gates and appropriate data prune rate. The size of the training set should be scaled at least quasi-linearly with the number of gates. These findings provide effective and practical guidelines to achieve accurate and reliable results with a reasonable configuration of gates and training data. Although we have improved model training efficiency through data pruning methods, there is still significant room for enhancement. For instance, the introduction of influence functions could better characterize the impact of data variations on the model, thus achieving more precise data filtering. 

We aspire for the work presented in this article to offer valuable insights and guidance for future practical research in quantum machine learning, from theoretical or practical aspects, on NISQ devices.

\bibliographystyle{unsrt}
\bibliography{mybib.bib}

\begin{thebibliography}{10}

\bibitem{schuld2014quest}
Maria Schuld, Ilya Sinayskiy, and Francesco Petruccione.
\newblock The quest for a quantum neural network.
\newblock {\em Quantum Information Processing}, 13(11):2567--2586, 2014.

\bibitem{farhi2018classification}
Edward Farhi and Hartmut Neven.
\newblock Classification with quantum neural networks on near term processors.
\newblock {\em arXiv preprint arXiv:1802.06002}, 2018.

\bibitem{havlivcek2019supervised}
Vojt{\v{e}}ch Havl{\'\i}{\v{c}}ek, Antonio~D C{\'o}rcoles, Kristan Temme,
  Aram~W Harrow, Abhinav Kandala, Jerry~M Chow, and Jay~M Gambetta.
\newblock Supervised learning with quantum-enhanced feature spaces.
\newblock {\em Nature}, 567(7747):209--212, 2019.

\bibitem{cong2019quantum}
Iris Cong, Soonwon Choi, and Mikhail~D Lukin.
\newblock Quantum convolutional neural networks.
\newblock {\em Nature Physics}, 15(12):1273--1278, 2019.

\bibitem{schuld2019quantum}
Maria Schuld and Nathan Killoran.
\newblock Quantum machine learning in feature hilbert spaces.
\newblock {\em Physical review letters}, 122(4):040504, 2019.

\bibitem{Huang2021PowerOD}
Hsin-Yuan Huang, Michael Broughton, Masoud Mohseni, Ryan Babbush, Sergio Boixo,
  Hartmut Neven, and Jarrod~R McClean.
\newblock Power of data in quantum machine learning.
\newblock {\em Nature communications}, 12(1):2631, 2021.

\bibitem{Biamonte2017QuantumML}
Jacob~D. Biamonte, Peter Wittek, Nicola Pancotti, Patrick Rebentrost, Nathan
  Wiebe, and Seth Lloyd.
\newblock Quantum machine learning.
\newblock {\em Nature}, 549:195--202, 2017.

\bibitem{Benedetti2019ParameterizedQC}
Marcello Benedetti, Erika Lloyd, and Stefan~H. Sack.
\newblock Parameterized quantum circuits as machine learning models.
\newblock {\em ArXiv}, abs/1906.07682, 2019.

\bibitem{cerezo2021variational}
Marco Cerezo, Andrew Arrasmith, Ryan Babbush, Simon~C Benjamin, Suguru Endo,
  Keisuke Fujii, Jarrod~R McClean, Kosuke Mitarai, Xiao Yuan, Lukasz Cincio,
  et~al.
\newblock Variational quantum algorithms.
\newblock {\em Nature Reviews Physics}, 3(9):625--644, 2021.

\bibitem{preskill2018quantum}
John Preskill.
\newblock Quantum computing in the nisq era and beyond.
\newblock {\em Quantum}, 2:79, 2018.

\bibitem{bharti2021noisy}
Kishor Bharti, Alba Cervera-Lierta, Thi~Ha Kyaw, Tobias Haug, Sumner
  Alperin-Lea, Abhinav Anand, Matthias Degroote, Hermanni Heimonen, Jakob~S
  Kottmann, Tim Menke, et~al.
\newblock Noisy intermediate-scale quantum algorithms.
\newblock {\em Reviews of Modern Physics}, 94(1):015004, 2022.

\bibitem{Liu2021ARA}
Yunchao Liu, Srinivasan Arunachalam, and Kristan Temme.
\newblock A rigorous and robust quantum speed-up in supervised machine
  learning.
\newblock {\em ArXiv}, abs/2010.02174, 2021.

\bibitem{abbas2021power}
Amira Abbas, David Sutter, Christa Zoufal, Aur{\'e}lien Lucchi, Alessio
  Figalli, and Stefan Woerner.
\newblock The power of quantum neural networks.
\newblock {\em Nature Computational Science}, 1(6):403--409, 2021.

\bibitem{jager2023universal}
Jonas J{\"a}ger and Roman~V Krems.
\newblock Universal expressiveness of variational quantum classifiers and
  quantum kernels for support vector machines.
\newblock {\em Nature Communications}, 14(1):576, 2023.

\bibitem{du2022demystify}
Yuxuan Du, Yibo Yang, Dacheng Tao, and Min-Hsiu Hsieh.
\newblock Demystify problem-dependent power of quantum neural networks on
  multi-class classification.
\newblock {\em arXiv preprint arXiv:2301.01597}, 2022.

\bibitem{wu2023quantum}
Yusen Wu, Bujiao Wu, Jingbo Wang, and Xiao Yuan.
\newblock Quantum phase recognition via quantum kernel methods.
\newblock {\em Quantum}, 7:981, 2023.

\bibitem{peters2021machine}
Evan Peters, Jo{\~a}o Caldeira, Alan Ho, Stefan Leichenauer, Masoud Mohseni,
  Hartmut Neven, Panagiotis Spentzouris, Doug Strain, and Gabriel~N Perdue.
\newblock Machine learning of high dimensional data on a noisy quantum
  processor.
\newblock {\em npj Quantum Information}, 7(1):161, 2021.

\bibitem{huang2021experimental}
He-Liang Huang, Yuxuan Du, Ming Gong, Youwei Zhao, Yulin Wu, Chaoyue Wang,
  Shaowei Li, Futian Liang, Jin Lin, Yu~Xu, et~al.
\newblock Experimental quantum generative adversarial networks for image
  generation.
\newblock {\em Physical Review Applied}, 16(2):024051, 2021.

\bibitem{pan2023experimental}
Xiaoxuan Pan, Xi~Cao, Weiting Wang, Ziyue Hua, Weizhou Cai, Xuegang Li, Haiyan
  Wang, Jiaqi Hu, Yipu Song, Dong-Ling Deng, et~al.
\newblock Experimental quantum end-to-end learning on a superconducting
  processor.
\newblock {\em npj Quantum Information}, 9(1):18, 2023.

\bibitem{ren2022experimental}
Wenhui Ren, Weikang Li, Shibo Xu, Ke~Wang, Wenjie Jiang, Feitong Jin, Xuhao
  Zhu, Jiachen Chen, Zixuan Song, Pengfei Zhang, et~al.
\newblock Experimental quantum adversarial learning with programmable
  superconducting qubits.
\newblock {\em Nature Computational Science}, 2(11):711--717, 2022.

\bibitem{herrmann2022realizing}
Johannes Herrmann, Sergi~Masot Llima, Ants Remm, Petr Zapletal, Nathan~A
  McMahon, Colin Scarato, Fran{\c{c}}ois Swiadek, Christian~Kraglund Andersen,
  Christoph Hellings, Sebastian Krinner, et~al.
\newblock Realizing quantum convolutional neural networks on a superconducting
  quantum processor to recognize quantum phases.
\newblock {\em Nature Communications}, 13(1):4144, 2022.

\bibitem{schuld2019evaluating}
Maria Schuld, Ville Bergholm, Christian Gogolin, Josh Izaac, and Nathan
  Killoran.
\newblock Evaluating analytic gradients on quantum hardware.
\newblock {\em Physical Review A}, 99(3):032331, 2019.

\bibitem{Goodfellow2015DeepL}
Ian~J. Goodfellow, Yoshua Bengio, and Aaron~C. Courville.
\newblock Deep learning.
\newblock {\em Nature}, 521:436--444, 2015.

\bibitem{verdon2019learning}
Guillaume Verdon, Michael Broughton, Jarrod~R McClean, Kevin~J Sung, Ryan
  Babbush, Zhang Jiang, Hartmut Neven, and Masoud Mohseni.
\newblock Learning to learn with quantum neural networks via classical neural
  networks.
\newblock {\em arXiv preprint arXiv:1907.05415}, 2019.

\bibitem{stokes2020quantum}
James Stokes, Josh Izaac, Nathan Killoran, and Giuseppe Carleo.
\newblock Quantum natural gradient.
\newblock {\em Quantum}, 4:269, 2020.

\bibitem{gacon2021simultaneous}
Julien Gacon, Christa Zoufal, Giuseppe Carleo, and Stefan Woerner.
\newblock Simultaneous perturbation stochastic approximation of the quantum
  fisher information.
\newblock {\em Quantum}, 5:567, 2021.

\bibitem{van2021measurement}
Barnaby van Straaten and B{\'a}lint Koczor.
\newblock Measurement cost of metric-aware variational quantum algorithms.
\newblock {\em PRX Quantum}, 2(3):030324, 2021.

\bibitem{volkoff2021large}
Tyler Volkoff and Patrick~J Coles.
\newblock Large gradients via correlation in random parameterized quantum
  circuits.
\newblock {\em Quantum Science and Technology}, 6(2):025008, 2021.

\bibitem{sweke2020stochastic}
Ryan Sweke, Frederik Wilde, Johannes Meyer, Maria Schuld, Paul~K F{\"a}hrmann,
  Barth{\'e}l{\'e}my Meynard-Piganeau, and Jens Eisert.
\newblock Stochastic gradient descent for hybrid quantum-classical
  optimization.
\newblock {\em Quantum}, 4:314, 2020.

\bibitem{du2021learnability}
Yuxuan Du, Min-Hsiu Hsieh, Tongliang Liu, Shan You, and Dacheng Tao.
\newblock Learnability of quantum neural networks.
\newblock {\em PRX Quantum}, 2(4):040337, 2021.

\bibitem{bittel2021training}
Lennart Bittel and Martin Kliesch.
\newblock Training variational quantum algorithms is np-hard.
\newblock {\em Physical review letters}, 127(12):120502, 2021.

\bibitem{wilson2018quantum}
CM~Wilson, JS~Otterbach, Nikolas Tezak, RS~Smith, AM~Polloreno, Peter~J
  Karalekas, S~Heidel, M~Sohaib Alam, GE~Crooks, and MP~da~Silva.
\newblock Quantum kitchen sinks: An algorithm for machine learning on near-term
  quantum computers.
\newblock {\em arXiv preprint arXiv:1806.08321}, 2018.

\bibitem{hur2022quantum}
Tak Hur, Leeseok Kim, and Daniel~K Park.
\newblock Quantum convolutional neural network for classical data
  classification.
\newblock {\em Quantum Machine Intelligence}, 4(1):3, 2022.

\bibitem{bowles2023backpropagation}
Joseph Bowles, David Wierichs, and Chae-Yeun Park.
\newblock Backpropagation scaling in parameterised quantum circuits.
\newblock {\em arXiv preprint arXiv:2306.14962}, 2023.

\bibitem{abbas2023quantum}
Amira Abbas, Robbie King, Hsin-Yuan Huang, William~J Huggins, Ramis Movassagh,
  Dar Gilboa, and Jarrod~R McClean.
\newblock On quantum backpropagation, information reuse, and cheating
  measurement collapse.
\newblock {\em arXiv preprint arXiv:2305.13362}, 2023.

\bibitem{glick2021covariant}
Jennifer~R Glick, Tanvi~P Gujarati, Antonio~D Corcoles, Youngseok Kim, Abhinav
  Kandala, Jay~M Gambetta, and Kristan Temme.
\newblock Covariant quantum kernels for data with group structure.
\newblock {\em arXiv preprint arXiv:2105.03406}, 2021.

\bibitem{hubregtsen2022training}
Thomas Hubregtsen, David Wierichs, Elies Gil-Fuster, Peter-Jan~HS Derks, Paul~K
  Faehrmann, and Johannes~Jakob Meyer.
\newblock Training quantum embedding kernels on near-term quantum computers.
\newblock {\em Physical Review A}, 106(4):042431, 2022.

\bibitem{thanasilp2022exponential}
Supanut Thanasilp, Samson Wang, Marco Cerezo, and Zo{\"e} Holmes.
\newblock Exponential concentration and untrainability in quantum kernel
  methods.
\newblock {\em arXiv preprint arXiv:2208.11060}, 2022.

\bibitem{bishop2007}
Christopher~M. Bishop.
\newblock {\em Pattern Recognition and Machine Learning (Information Science
  and Statistics)}.
\newblock Springer, 1 edition, 2007.

\bibitem{perez2020data}
Adri{\'a}n P{\'e}rez-Salinas, Alba Cervera-Lierta, Elies Gil-Fuster, and
  Jos{\'e}~I Latorre.
\newblock Data re-uploading for a universal quantum classifier.
\newblock {\em Quantum}, 4:226, 2020.

\bibitem{lloyd2020quantum}
Seth Lloyd, Maria Schuld, Aroosa Ijaz, Josh Izaac, and Nathan Killoran.
\newblock Quantum embeddings for machine learning.
\newblock {\em arXiv preprint arXiv:2001.03622}, 2020.

\bibitem{bilkis2021semi}
M~Bilkis, Mar{\'\i}a Cerezo, Guillaume Verdon, Patrick~J Coles, and Lukasz
  Cincio.
\newblock A semi-agnostic ansatz with variable structure for quantum machine
  learning.
\newblock {\em arXiv preprint arXiv:2103.06712}, 2021.

\bibitem{zhang2022differentiable}
Shi-Xin Zhang, Chang-Yu Hsieh, Shengyu Zhang, and Hong Yao.
\newblock Differentiable quantum architecture search.
\newblock {\em Quantum Science and Technology}, 7(4):045023, 2022.

\bibitem{du2022quantum}
Yuxuan Du, Tao Huang, Shan You, Min-Hsiu Hsieh, and Dacheng Tao.
\newblock Quantum circuit architecture search for variational quantum
  algorithms.
\newblock {\em npj Quantum Information}, 8(1):62, 2022.

\bibitem{sauvage2022building}
Frederic Sauvage, Martin Larocca, Patrick~J Coles, and M~Cerezo.
\newblock Building spatial symmetries into parameterized quantum circuits for
  faster training.
\newblock {\em arXiv preprint arXiv:2207.14413}, 2022.

\bibitem{larocca2022group}
Martin Larocca, Fr{\'e}d{\'e}ric Sauvage, Faris~M Sbahi, Guillaume Verdon, and
  Patrick~J Coles.
\newblock Group-invariant quantum machine learning.
\newblock {\em PRX Quantum}, 3(3):030341, 2022.

\bibitem{du2022distributed}
Yuxuan Du, Yang Qian, Xingyao Wu, and Dacheng Tao.
\newblock A distributed learning scheme for variational quantum algorithms.
\newblock {\em IEEE Transactions on Quantum Engineering}, 3:1--16, 2022.

\bibitem{banchi2021generalization}
Leonardo Banchi, Jason Pereira, and Stefano Pirandola.
\newblock Generalization in quantum machine learning: A quantum information
  standpoint.
\newblock {\em PRX Quantum}, 2(4):040321, 2021.

\bibitem{caro2021encoding}
Matthias~C Caro, Elies Gil-Fuster, Johannes~Jakob Meyer, Jens Eisert, and Ryan
  Sweke.
\newblock Encoding-dependent generalization bounds for parametrized quantum
  circuits.
\newblock {\em Quantum}, 5:582, 2021.

\bibitem{gyurik2021structural}
Casper Gyurik, Vedran Dunjko, et~al.
\newblock Structural risk minimization for quantum linear classifiers.
\newblock {\em Quantum}, 7:893, 2023.

\bibitem{du2022efficient}
Yuxuan Du, Zhuozhuo Tu, Xiao Yuan, and Dacheng Tao.
\newblock Efficient measure for the expressivity of variational quantum
  algorithms.
\newblock {\em Physical Review Letters}, 128(8):080506, 2022.

\bibitem{caro2022generalization}
Matthias~C Caro, Hsin-Yuan Huang, Marco Cerezo, Kunal Sharma, Andrew
  Sornborger, Lukasz Cincio, and Patrick~J Coles.
\newblock Generalization in quantum machine learning from few training data.
\newblock {\em Nature communications}, 13(1):4919, 2022.

\bibitem{wang2021towards}
Xinbiao Wang, Yuxuan Du, Yong Luo, and Dacheng Tao.
\newblock Towards understanding the power of quantum kernels in the nisq era.
\newblock {\em Quantum}, 5:531, 2021.

\bibitem{xue2023near}
Yecheng Xue, Xiaoyu Chen, Tongyang Li, and Shaofeng H-C Jiang.
\newblock Near-optimal quantum coreset construction algorithms for clustering.
\newblock {\em arXiv preprint arXiv:2306.02826}, 2023.

\bibitem{qu2022performance}
Fanzhe Qu, Sarah~M Erfani, and Muhammad Usman.
\newblock Performance analysis of coreset selection for quantum implementation
  of k-means clustering algorithm.
\newblock {\em arXiv preprint arXiv:2206.07852}, 2022.

\bibitem{otgonbaatar2021assembly}
Soronzonbold Otgonbaatar and Mihai Datcu.
\newblock Assembly of a coreset of earth observation images on a small quantum
  computer.
\newblock {\em Electronics}, 10(20):2482, 2021.

\bibitem{agarwal2005geometric}
Pankaj~K Agarwal, Sariel Har-Peled, Kasturi~R Varadarajan, et~al.
\newblock Geometric approximation via coresets.
\newblock {\em Combinatorial and computational geometry}, 52(1), 2005.

\bibitem{bachem2017practical}
Olivier Bachem, Mario Lucic, and Andreas Krause.
\newblock Practical coreset constructions for machine learning.
\newblock {\em arXiv preprint arXiv:1703.06476}, 2017.

\bibitem{feldman2020introduction}
Dan Feldman.
\newblock Introduction to core-sets: an updated survey.
\newblock {\em arXiv preprint arXiv:2011.09384}, 2020.

\bibitem{farahani2009facility}
Reza~Zanjirani Farahani and Masoud Hekmatfar.
\newblock {\em Facility location: concepts, models, algorithms and case
  studies}.
\newblock Springer Science \& Business Media, 2009.

\bibitem{jerbi2023quantum}
Sofiene Jerbi, Lukas~J Fiderer, Hendrik Poulsen~Nautrup, Jonas~M K{\"u}bler,
  Hans~J Briegel, and Vedran Dunjko.
\newblock Quantum machine learning beyond kernel methods.
\newblock {\em Nature Communications}, 14(1):517, 2023.

\bibitem{schuld2021effect}
Maria Schuld, Ryan Sweke, and Johannes~Jakob Meyer.
\newblock Effect of data encoding on the expressive power of variational
  quantum-machine-learning models.
\newblock {\em Physical Review A}, 103(3):032430, 2021.

\bibitem{yang2001face}
Ming-Hsuan Yang.
\newblock Face recognition using kernel methods.
\newblock {\em Advances in neural information processing systems}, 14, 2001.

\bibitem{jacot2018neural}
Arthur Jacot, Franck Gabriel, and Cl{\'e}ment Hongler.
\newblock Neural tangent kernel: Convergence and generalization in neural
  networks.
\newblock {\em Advances in neural information processing systems}, 31, 2018.

\bibitem{J_ger_2023}
Jonas Jäger and Roman~V. Krems.
\newblock Universal expressiveness of variational quantum classifiers and
  quantum kernels for support vector machines.
\newblock {\em Nature Communications}, 14(1), feb 2023.

\bibitem{mohri2018foundations}
Mehryar Mohri, Afshin Rostamizadeh, and Ameet Talwalkar.
\newblock {\em Foundations of machine learning}.
\newblock MIT press, 2018.

\bibitem{xiao2017fashion}
Han Xiao, Kashif Rasul, and Roland Vollgraf.
\newblock Fashion-mnist: a novel image dataset for benchmarking machine
  learning algorithms.
\newblock {\em arXiv preprint arXiv:1708.07747}, 2017.

\bibitem{gross2009most}
David Gross, Steve~T Flammia, and Jens Eisert.
\newblock Most quantum states are too entangled to be useful as computational
  resources.
\newblock {\em Physical review letters}, 102(19):190501, 2009.

\bibitem{yang2019experimental}
Mu~Yang, Chang-liang Ren, Yue-chi Ma, Ya~Xiao, Xiang-Jun Ye, Lu-Lu Song,
  Jin-Shi Xu, Man-Hong Yung, Chuan-Feng Li, and Guang-Can Guo.
\newblock Experimental simultaneous learning of multiple nonclassical
  correlations.
\newblock {\em Physical review letters}, 123(19):190401, 2019.

\bibitem{khatri2019quantum}
Sumeet Khatri, Ryan LaRose, Alexander Poremba, Lukasz Cincio, Andrew~T
  Sornborger, and Patrick~J Coles.
\newblock Quantum-assisted quantum compiling.
\newblock {\em Quantum}, 3:140, 2019.

\end{thebibliography}
%\begin{thebibliography}{9}
%\bibitem{examplecitation}
%  Name Surname,
%  \href{https://doi.org/10.22331/
%        idonotexist}{Quantum
%        \textbf{123}, 123456 (1916).}
%
%\bibitem{biblatexsubmittingtothearxiv}
%  StackExchange discussion on \href{http://tex.stackexchange.com/questions/26990/biblatex-submitting-to-the-arxiv}{``Biblatex: submitting to the arXiv'' (2017-01-10)}
%
%\bibitem{arxivpdfoutput}
%  Help article published by the arXiv on \href{https://arxiv.org/help/submit_tex}{``Considerations for TeX Submissions'' (2017-01-10)}
%
%\bibitem{howtogetdoilinksinbibliography}
%  StackExchange discussion on \href{http://tex.stackexchange.com/questions/3802/how-to-get-doi-links-in-bibliography}{``How to get DOI links in bibliography'' (2016-11-18)}
%  
%\bibitem{automaticallyaddingdoifieldstoahandmadebibliography}
%  StackExchange discussion on \href{http://tex.stackexchange.com/questions/6810/automatically-adding-doi-fields-to-a-hand-made-bibliography}{``Automatically adding DOI fields to a hand-made bibliography'' (2016-11-18)}
%\end{thebibliography}

\onecolumn\newpage
\appendix

\section{Appendix}

\begin{theorem}\label{thm:qnn-generalization}{\bf{(Generalization bound for QNN \cite{caro2022generalization})}} 
Let the QNN to be trained consists of $m$ trainable parameterized 2-qubit or 1-qubit gates, and an arbitrary number of non-trainable fixed gates. Suppose that, given training data ${\cal{S}}=\{{{\bm{x}}}_i, y_i\}_{i=1}^{N_t}$ and a loss function $l(f_{\bm{\theta}};{{\bm{x}}}_i,y_i)$ bounded by $L$, our optimization yields optimized parameters $\boldsymbol{\theta^{*}}$. Then, according to Theorem C.6 in \cite{caro2022generalization}, the generalization error of the optimized QNN is bounded by the following form with probability at least $(1-\delta)$ over the choice of $i.i.d$ training data ${\cal{S}}_t$ from $\mathcal{Z}$:
\begin{equation}
\left|R-R_{e}\right| < \mathcal{O}\left(\sqrt{\frac{m \log(m)}{N_t}}+\sqrt{\frac{\log(1/\delta)}{N_t}}\right).
\end{equation}
\end{theorem}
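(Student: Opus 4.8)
The plan is to recognize Theorem~\ref{thm:qnn-generalization} as a direct specialization of the covering-number generalization argument of Caro \emph{et al.}~\cite{caro2022generalization}, and to reconstruct its skeleton so that the scaling $\sqrt{m\log(m)/N_t}$ becomes transparent. First I would fix the hypothesis class $\mathcal{F}=\{\bm{x}\mapsto \text{tr}[M\ket{\bm{x},\theta}\bra{\bm{x},\theta}]\}$ indexed by the $m$ trainable gate angles $\theta\in\mathbb{R}^m$, and compose it with the bounded loss to obtain the induced loss class whose empirical average is $R_e$ and whose expectation is $R$. The target quantity $\left|R-R_e\right|$ is then a uniform deviation over this class, so the problem reduces to controlling its statistical complexity.

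Second, I would establish Lipschitz continuity of the map $\theta\mapsto f_\theta(\bm{x})$. Because every trainable gate takes the form $\exp(-i\theta_j H)$ with $H$ a Pauli operator (so $\norm{H}=1$) while the remaining gates are unitary and hence non-expansive, differentiating the output expectation value and invoking $\norm{M}<\infty$ bounds each partial derivative uniformly in $\bm{x}$; this yields a Lipschitz constant that grows only controllably in the number of gates. This step is the linchpin, since it lets an $\epsilon$-net of the compact parameter cube $[0,2\pi]^m$ be pushed forward to an $\epsilon'$-cover of $\mathcal{F}$ in the supremum metric.

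Third, I would bound the covering number. An $\epsilon$-net of $[0,2\pi]^m$ has cardinality $(C/\epsilon)^m$, so after accounting for the Lipschitz factor the metric entropy of $\mathcal{F}$ obeys $\log\mathcal{N}(\epsilon)\lesssim m\log(1/\epsilon)$. Feeding this into Dudley's entropy integral---or, equivalently, into a union bound over the net combined with Hoeffding's inequality applied to each fixed hypothesis---produces an empirical Rademacher complexity of order $\sqrt{m\log(m)/N_t}$, where the $\log(m)$ arises once the net resolution $\epsilon$ is optimized against the Lipschitz constant. A final bounded-differences (McDiarmid) step, valid because the loss is bounded by $L$, promotes this expected deviation to a high-probability statement, contributing the additive $\sqrt{\log(1/\delta)/N_t}$ term and yielding the claimed bound with probability at least $1-\delta$.

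The main obstacle I anticipate is the second step: showing that the parameter-to-output map retains a well-behaved Lipschitz constant as the circuit grows, so that the resulting metric entropy is genuinely $m\log(m)$ rather than something larger. The insight that makes this tractable is precisely the one underlying Theorem~C.6 of \cite{caro2022generalization}---unitarity renders each gate non-expansive, so perturbations in individual parameters do not compound multiplicatively, and the boundedness of $M$ keeps the observable's sensitivity uniformly controlled across the entire input domain.
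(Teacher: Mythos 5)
This theorem is not proved in the paper at all: it is imported verbatim from Caro \emph{et al.}~\cite{caro2022generalization} (the statement itself points to their Theorem C.6), so there is no internal proof to compare your reconstruction against. What you have written is a correct outline of how the \emph{cited} result is established, and it matches that proof's architecture: metric-entropy control of the circuit class, a Lipschitz/telescoping argument exploiting unitarity so that gate perturbations add rather than compound, chaining to get the $\sqrt{m\log(m)/N_t}$ complexity term, and a bounded-differences step for the $\sqrt{\log(1/\delta)/N_t}$ confidence term. Two minor points of divergence from the reference are worth flagging. First, Caro \emph{et al.}\ cover the trainable \emph{gates} themselves (compact subsets of $U(2)$ and $U(4)$, each of fixed dimension) in operator norm, rather than an angle cube $[0,2\pi]^m$; this keeps the result valid for arbitrary parameterizations of the trainable gates, as the theorem statement requires, whereas your push-forward of a parameter net is tied to the specific Pauli-rotation ansatz $\exp(-i\theta_j H)$. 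Second, your claim that Dudley's entropy integral is ``equivalent'' to a single-scale union bound plus Hoeffding is not quite right: the single-scale argument, after optimizing the net resolution against the Lipschitz constant, leaves a residual $\sqrt{m\log(m N_t)/N_t}$, and the genuine chaining integral is what removes the $\log N_t$ and delivers the stated $\sqrt{m\log(m)/N_t}$. Neither point is a gap in substance, but the second is the kind of slippage that changes the advertised rate.
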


\begin{lemma}\label{lemma:qnn}
The loss function $l(f_{\bm{\theta}}(\bm{x}),y)$ of $d$-layer QNN introduced in section \ref{sec:qnn} is the $\lambda$-Lipschitz in feature space with 
\begin{equation}
\lambda=2d\sqrt{d_{\bm{x}}}\max_j|\bm{w}_j|\left|M\right|(\left|M\right|+\max|y|).
\end{equation}
\end{lemma}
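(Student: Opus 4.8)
The plan is to reduce the Lipschitz claim to a uniform bound on the gradient of the loss with respect to the feature vector: since the QNN output $f_{\bm{\theta}}(\bm{x})=\text{tr}[M|\bm{x},\theta\rangle\langle\bm{x},\theta|]$ is smooth in $\bm{x}$, the loss $l=(f_{\bm{\theta}}(\bm{x})-y)^2$ is $C^1$ on the (convex) feature space, and the fundamental theorem of calculus along the segment joining any $\bm{x}$ and $\bm{x}'$ gives $|l(\bm{x})-l(\bm{x}')|\le \big(\sup_{\bm{x}}\norm{\nabla_{\bm{x}} l}\big)\,\norm{\bm{x}-\bm{x}'}$. So it suffices to show $\norm{\nabla_{\bm{x}} l}\le\lambda$. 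By the chain rule $\nabla_{\bm{x}} l = 2\,(f_{\bm{\theta}}(\bm{x})-y)\,\nabla_{\bm{x}} f_{\bm{\theta}}(\bm{x})$, which I would bound by controlling the scalar prefactor and the circuit gradient separately.

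First I would bound the prefactor. Because $|\bm{x},\theta\rangle\langle\bm{x},\theta|$ is a density operator, $|f_{\bm{\theta}}(\bm{x})|=|\text{tr}[M|\bm{x},\theta\rangle\langle\bm{x},\theta|]|\le |M|$, and the triangle inequality gives $|f_{\bm{\theta}}(\bm{x})-y|\le |M|+\max|y|$. This already accounts for the factor $|M|+\max|y|$ in $\lambda$.

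The technical heart is bounding $\norm{\nabla_{\bm{x}} f_{\bm{\theta}}}$ by differentiating through the data-reuploading circuit. Each feature coordinate $\bm{x}_j$ enters $|\bm{x},\theta\rangle$ once per layer, i.e. in $d$ of the encoding gates $\exp(-i\bm{w}_j\bm{x}_j H)$. Differentiating $f_{\bm{\theta}}=\langle\bm{x},\theta|M|\bm{x},\theta\rangle$ therefore produces, via the product rule, a sum over these $d$ layers; in each term the derivative acts on a single encoding gate and brings down a generator $\bm{w}_j H$ with $\norm{H}=1$ (since $H\in\{\sigma_x,\sigma_y,\sigma_z\}$). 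Using unitarity of the surrounding gates and submultiplicativity of the operator norm, I would bound each single-coordinate derivative by $|\partial_{\bm{x}_j} f_{\bm{\theta}}|\le d\,|\bm{w}_j|\,|M|$, and then aggregate the $d_{\bm{x}}$ coordinates in the $\ell_2$ norm to obtain $\norm{\nabla_{\bm{x}} f_{\bm{\theta}}}\le d\sqrt{d_{\bm{x}}}\,\max_j|\bm{w}_j|\,|M|$. Multiplying by the prefactor bound yields $\norm{\nabla_{\bm{x}} l}\le 2(|M|+\max|y|)\,d\sqrt{d_{\bm{x}}}\max_j|\bm{w}_j||M|$, which is exactly the claimed $\lambda$.

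The main obstacle I anticipate is the careful accounting in the differentiation step. One must verify that the data-reuploading structure contributes exactly $d$ derivative terms per feature coordinate and that the per-layer operator-norm factors accumulate only linearly in $d$ rather than compounding; one must also correctly handle the two conjugate terms arising from differentiating a real expectation value so that the numerical constant comes out as stated. A secondary bookkeeping issue is that the encoding weights $\bm{w}_j$, which are left implicit in the main-text form of $U_{\bm{x}}$, must be made explicit in the encoding gates for the factor $\max_j|\bm{w}_j|$ to emerge; once these points are pinned down, the remaining estimates are routine norm inequalities.
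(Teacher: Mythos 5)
Your proposal is correct and follows essentially the same route as the paper's proof: bound the Lipschitz constant by the supremum of the loss gradient, split off the prefactor $2|f_{\bm{\theta}}(\bm{x})-y|\le 2(|M|+\max|y|)$ by the chain rule, bound $\|\partial_{x_j}|\bm{x},\theta\rangle\|$ via the product rule over the $d$ reuploading layers (each term contributing $|w_j^{(k)}|$ since the surrounding unitaries and the Pauli generator have unit norm), and aggregate coordinates with the $\sqrt{d_{\bm{x}}}$ factor. Even the constant bookkeeping you flag as delicate (the two conjugate terms from differentiating the real expectation value) is handled with the same looseness as in the paper, so you land on exactly the stated $\lambda$.
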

\begin{proof}
Assume given data $\bm{x}=(x_1,\cdots,x_{d_{\bm{x}}})$, data encoding scheme used by the QNN is the $d$-layer re-uploading scheme, the mapped quantum state is written as,
\begin{equation}
\ket{\bm{x},\bm{\theta}}=U({\bm{x}},\bm{\theta}^{(d)},\bm{w}^{(d)})...U({\bm{x}},\bm{\theta}^{(1)},\bm{w}^{(1)})\ket{0},
\end{equation}
where $\bm{w}$ is encoding parameters in data-reuploading, the $i$-th layer $U({\bm{x}},\bm{\theta}^{(i)},\bm{w}^{(i)})$ can be presented as,
\begin{equation}
    U({\bm{x}},\bm{\theta}^{(i)},\bm{w}^{(i)})= U(\bm{\theta}^{(i)})\prod_k\exp{(i w_k^{(i)} x_k P_k)}, 
\end{equation}
where $P$ is a Pauli gate and $\bm{\theta},\bm{w}$ both trainable parameters.

Assume the Jacobian vector $J$ of the loss function $l=(\langle \bm{x},\bm{\theta}|M|\bm{x},\bm{\theta}\rangle - y)^2$ is bounded in the feature space, now we investigate the Lipschitz constant of $l$.
\begin{equation}
    \left|l({\bm{x}})-l(\bm{x}')\right| \leq \max \|J\|_2 \|{\bm{x}}-\bm{x}'\|_2, \forall {\bm{x}}, \bm{x}' \in \mathbb{R}^{d_{\bm{x}}},
\end{equation}
where the Jacobian matrix is,
\begin{equation}
    J = \left[ \frac{\partial l}{\partial x_1},\cdots,\frac{\partial l}{\partial x_{d_{\bm{x}}}} \right].
\end{equation}
Thus the maximum norm of the Jacobian matrix $J$ can be bounded by the maximum norm of derivative of $l$ with respect to $\bm{x}$,
\begin{equation}
    \max\|J\|_2 \leq \sqrt{d_{\bm{x}}}\max_j\left|\frac{\partial l}{\partial x_j}\right|,
\end{equation}
where the $\left|\frac{\partial l}{\partial x_j}\right|$ can be bounded by
\begin{align}
    \left|\frac{\partial l}{\partial x_j}\right| &= \left|\left( \bra{\psi}M\ket{\psi}-y\right)\cdot 2\mathcal{R}\left(\bra{\psi}M\frac{\partial \ket{\psi}}{\partial x_j}\right) \right|
    \cr
    &\leq (\left|M\right|+\max|y|)\cdot 2|M|\cdot \left|\frac{\partial \ket{\psi}}{\partial x_j}\right|_2.
\end{align}
Here, $\left|M\right|$ denotes the spectral norm of the measurement operator.\\
To bound $\|\frac{\partial \ket{\psi}}{\partial x_j}\|_2$, let $U^{(k)}=U({\bm{x}},\bm{\theta}^{(k)},\bm{w}^{(k)})$, we first apply product rule to $\frac{\partial \ket{\psi}}{\partial x_j}$,
\begin{equation}
    \frac{\partial \ket{\psi}}{\partial x_j}=\sum_{k=1}^d U^{(d)}...iw_j^{(k)}P_jU^{(k)}...U^{(1)}\ket{0}.
\end{equation}
As the norm of operators $U^{(k)}$ and $P_j$ is equal to 1, we can bound $\left|\frac{\partial \ket{\psi}}{\partial x_j}\right|_2$ by
\begin{align}
\left|\frac{\partial \ket{\psi}}{\partial x_j}\right|_2 & \leq \sum_{k=1}^d \left|U^{(d)}...w_j^{(k)}P_j U^{(k)}...U^{(1)}\ket{0}\right|_2\cr
& = \sum_{k=1}^d\left|w_j^{(k)}\right|\cr
& \leq d\max_j\left|\bm{w}_j\right|.
\end{align}
Hence, we can bound $\left|\frac{\partial l}{\partial x_j}\right|$ by 
\begin{align}
    \left|\frac{\partial l}{\partial x_j}\right| &\leq 2d\max_j|\bm{w}_j|\cdot\left|M\right|\cdot(\left|M\right||+\max|y|). 
\end{align}
Provided the feature space and the label space are bounded, the loss function is $\lambda$-Lipschitz continuous with
\begin{equation}
\lambda=2d\sqrt{d_{\bm{x}}}\max_j\left|\bm{w}_j\right|\cdot \left|M\right|(\left|M\right|+\max|y|).
\end{equation}

\end{proof}

%\begin{theorem}\label{thm:qnn-coreset}{\bf{(bounds on coreset error of QNN)}}
%Given sample set $\mathcal{S} = \{{{\bm{x}}}_i,y_i\}_{i=1}^{N_t}$ are $i.i.d$ randomly drawn from distribution $\mathcal{D}$, $S_c$ is $\delta_c$ cover of $\mathcal{S}$. Assume that the training loss $l(f_{\bm{\theta}};{{\bm{x}}}_i,y_i)$ introduced in Definition \ref{def:qnn} over coreset $S_c$ is zero and the loss is also bounded by L. We have following upper bound for the risk with probability $1-\delta$,
%\begin{equation}
%R\in \mathcal{O}(L (\sqrt{\frac{m \log(m)}{N_t}}+\sqrt{\frac{\log(1/\delta)}{N_t}})+\delta_c (\lambda_\eta L N_c + d\sqrt{n}\max|w|\cdot\left|M\right|\cdot(\left|M\right|+\max|y|) ))
%\end{equation}
%where $M$ is the measurement operator, $N_c$ is the number of classes, $m$ is the number of the parameters in QNN. 
%\end{theorem}

\subsection{Proof of Theorem \ref{thm:qnn-coreset}}
% Since we assume the training error over coreset is zero, the coreset error term only contains the term of the empirical error over the full training data set. Thus, if we are able to give a bound of $\mathbb{E}_{(\bm{x},y)\sim {\mathcal{S}}_t}[l(f_{\theta};\bm{x},y)]$, then the coreset error can be bounded. Assume there are $\lambda_{\eta}$-Lipshcitz continuous class-specific regression functions $\eta_c(\bm{x})=p(y=c|\bm{x})$ for all class $c$, the loss $l(f_\theta,\cdot,y)$ is bounded by $L$, and we construct $\delta_c$ coreset for each class to cover full data set, we have the following bound on coreset error, $\|R_e - R_c\|\in \mathcal{O}(\delta_c m \|M\|^2 + \delta_c\lambda_{\eta}L|c|).$
% Combining the generalization error and coreset error together and retaining relevant variables, we reach the bound for the true risk.

We neglect the residual training error in coreset from now on. For the $n_c$-class classification problem, we assume that there are class-specific regression functions that represent the probability of a class label in the neighborhood of a feature vector $\eta_{c}({\bm{x}})=p(y=c |{{\bm{x}}})$ and that they satisfy the property of being $\lambda_\eta$-Lipschitz continuous with respect to the feature space of $\bm{x}$. Then we present Lipschitz constants for the loss functions of QNN and derive the upper bound for the generalization error.

\begin{proof}
By triangular inequality, 
\begin{equation}
    G_c^{\text{QNN}}=\left|R - R_c^{\text{QNN}}\right| \leq \left|R - R_e^{\text{QNN}}\right|+\left|R_e^{\text{QNN}} - R_c^{\text{QNN}}\right|.
\end{equation}
Since $\left|R-R_e^{\text{QNN}}\right|$ is bounded by Theorem \ref{thm:qnn-generalization}, 
\begin{equation}
|R-R_{e}^{\text{QNN}}| < \mathcal{O}\left(\sqrt{\frac{m \log(m)}{N_t}}+\sqrt{\frac{\log(1/\delta)}{N_t}}\right).
\end{equation}
Let us attempt to relate $R_e$ with coreset. Since we assume the training loss over coreset $R_c^{\text{QNN}}$ is equal to zero, only $R_e^{\text{QNN}}$ should be considered. Here we allow differently labelled samples to appear mixed in a local region of the feature space. Instead of regarding the probability of each label jumping between 0 and 1, we use the local class-specific regression function $\eta$ to describe the probability of each label close to a point.
\begin{align}
R_e^{\text{QNN}} &= \frac{1}{|{\cal{X}}|} \sum_{{{\bm{x}}}_j \in {\cal{X}}} \sum_{c_i} ( \eta_{c_i}  ({\bm{x}_j}) \cdot l(f_{\bm{\theta}}({{\bm{x}}}_j),c_i))\cr
&= \mathbb{E}_{{{\bm{x}}}_j \in {\cal{X}}} \sum_{c_i} ( \eta_{c_i}({\bm{x}_j}) \cdot l(f_{\bm{\theta}}({{\bm{x}}}_j),c_i)).
\end{align}
For each $({\bm{x}_j},y_j)$ drawn from the training set, we are guaranteed with the nearest center $({\bm{x}_c},y_j)$ of the same class in the coreset with distance less than $\delta_c$. If locally the class-specific regression function $\eta_{c_j} \neq 0$ for another label $c_i$ with $c_i \neq y_j$, then we would have at least one differently labeled sample within $\delta_c$ from ${\bm{x}_c}$ and $2\delta_c$ from ${\bm{x}_j}$. This other-class sample in the same circle, in turn, guarantees an other-class center ${\bm{x}_c}(c_i)$ within $3\delta_c$ from ${\bm{x}_j}$. For each $\bm{x}_j$, we have
\begin{align}
&\sum_{c_i} \eta_{c_i}({\bm{x}_j}) \cdot l(f_{\bm{\theta}}({\bm{x}_j}),c_i)\cr
= &\sum_{c_i} \eta_{c_i}({\bm{x}_j})\cdot \left[ l(f_{\bm{\theta}}(\bm{x}_j),c_i)-l(f_{\bm{\theta}}(\bm{x}_j(c_i)),c_i)\right] + \sum_{c_i} \eta_{c_i}(\bm{x}_j) \cdot l(f_{\bm{\theta}}(\bm{x}_c(c_i)),c_i)\cr
%=&\sum_{c_i} \eta_{c_i}({\bf{x_t}})\cdot (l(f_{\bm{\theta}};{{\bf{x_t}}},c_i)-l(f_{\bm{\theta}};{{\bf{x_c}}}(c_i),c_i))+\sum_{c_i} \eta_{c_i}({\bf{x_t}})\cdot l(f_{\bm{\theta}};{{\bf{x_c}}}(c_i),c_i)\cr
=&\sum_{c_i} \eta_{c_i}({\bm{x}_j})\cdot \left[ l(f_{\bm{\theta}}(\bm{x}_j),c_i)-l(f_{\bm{\theta}}(\bm{x}_j(c_i)),c_i)\right] + \sum_{c_i} \left[\eta_{c_i}(\bm{x}_j)-\eta_{c_i}(\bm{x}_c(c_i))\right] \cdot l(f_{\bm{\theta}}(\bm{x}_c(c_i)),c_i)\cr 
& + \sum_{c_i} \eta_{c_i}(\bm{x}_c(c_i))\cdot l(f_{\bm{\theta}}(\bm{x}_c(c_i)),c_i).
%=&\sum_{c_i} (\eta_{c_i}({\bf{x_t}}) -\eta_{c_i}({\bf{x_c}}))\cdot l(f_{\bm{\theta}};{{\bf{x_t}}},c_i)+\sum_{c_i} \eta_{c_i}({\bf{x_c}}) \cdot l(f_{\bm{\theta}};{{\bf{x_t}}},c_i)\cr
%=&\sum_{c_i} (\eta_{c_i}({\bf{x_t}}) -\eta_{c_i}({\bf{x_c}}))\cdot l(f_{\bm{\theta}};{{\bf{x_t}}},c_i)+\sum_{c_i} \eta_{c_i}({\bf{x_c}}) \cdot (l(f_{\bm{\theta}};{{\bf{x_t}}},c_i)-l(f_{\bm{\theta}};{{\bf{x_c}}},c_i))\cr&+\sum_{c_i} \eta_{c_i}({\bf{x_c}}) \cdot l(f_{\bm{\theta}};{{\bf{x_c}}},c_i)
\end{align}
The last term is the loss of the center weighted by the local probability of being same-class as that center. For a specific center, it will be called by both same-class samples and other-class samples that find it closest in the coreset, the number of calls is approximately proportional to the number of samples covered by the center. With the factor $\eta_{c_i}(\bm{x}_c(c_i))$ in each call, the corresponding weight for a center in the coreset should be the number of samples covered $\times$ local probability of being center-class, i.e., the number of same-class samples covered by the center. This justifies our choice of the weight in Eq. (\ref{eq:gamma}). Assuming the weighted loss is trained to zero, the form of the previous expression is then simplified to
\begin{align}
\sum_{c_i} \eta_{c_i}(\bm{x}_j) \cdot l(f_{\bm{\theta}}(\bm{x}_j),c_i)&=\sum_{c_i} \eta_{c_i}(\bm{x}_j) \cdot \left[ l(f_{\bm{\theta}}(\bm{x}_j),c_i)-l(f_{\bm{\theta}}(\bm{x}_c(c_i)),c_i)\right] \cr
&+\sum_{c_i} \left[\eta_{c_i}(\bm{x}_j) - \eta_{c_i}(\bm{x}_c(c_i)\right] \cdot l(f_{\bm{\theta}}(\bm{x}_c(c_i)),c_i).
%=&\sum_{c_i} (\eta_{c_i}({\bf{x_t}}) -\eta_{c_i}({\bf{x_c}}))\cdot l(f_{\bm{\theta}};{{\bf{x_t}}},c_i)+\sum_{c_i} \eta_{c_i}({\bf{x_c}}) \cdot (l(f_{\bm{\theta}};{{\bf{x_t}}},c_i)-l(f_{\bm{\theta}};{{\bf{x_c}}},c_i))
\end{align}
The two parts in the last equation can be separately bounded by the Lipschitz continuity $\lambda_{\zeta}$ and $\lambda_l$ of $\eta_{c_i}$ and $l$, and assume the $l$ is bounded by $L$. Substituting into the corresponding constants, we have
\begin{align}
\mathbb{E}_{{{\bm{x}}}_j \in {\cal{X}}} \sum_{c_i} ( \eta_{c_i}({\bm{x}_j}) \cdot l(f_{\bm{\theta}}({{\bm{x}}}_j),c_i)) \leq & (\sum_{c_i\neq y_j} \eta_{c_i}(\bm{x}_j)\cdot3\delta_c + \eta_{y_j}(\bm{x}_j) \cdot \delta_c) \lambda_l + ((n_c-1)\cdot3\delta_c + \delta_c)L\lambda_{\eta}   
\cr
\leq& (\sum_{c_i} \eta_{c_i}(\bm{x}_j))3\delta_c\lambda_l+ (3n_c-2)\delta_c\lambda_{\eta} L  \cr
=& 3\delta_c\lambda_l+ (3n_c-2)\delta_c\lambda_{\eta} L.  \cr
\end{align}
According to Hoeffding's bound and assumption of zero coreset training loss, we have the coreset error of QNN with probability $1-\delta$,
\begin{equation}
    \left|R_e^{\text{QNN}} - R_c^{\text{QNN}}\right| \leq 3\delta_c\lambda_l + (3n_c-2)\delta_c\lambda_\eta L + \sqrt{\frac{L^2\log(1/\delta)}{2n}}.
\end{equation}
Combing the bound of the $\left|R - R_e^{\text{QNN}}\right|$, we have the generalization bound of QNN on coreset with probability $1-\delta$
\begin{equation}
    G_c^{\text{QNN}}\leq \mathcal{O}\left(\sqrt{\frac{m \log(m)}{N_t}}+\sqrt{\frac{\log(1/\delta)}{N_t}}+\delta_c (\lambda_\eta L n_c + d\sqrt{d_{\bm{x}}}\max_j|\bm{w}_j|\left|M\right|(\left|M\right|+\max|y|)\right).
\end{equation}
\end{proof}

\begin{theorem}\label{thm:qkernel-generalization-learning}{\bf{(Generalization bound for learning with quantum kernel \cite{abbas2021power})}} Suppose the parameters $\bm{w}$ of the hypothesis $f_{\bm{w}}$ is optimized to be $\bm{w^*}$ by minimizing the loss $l(f_{\bm{w}}(\bm{x})),y)=|\min(1,\max(-1,\sum_j w_j\kappa(\bm{x}_j,\bm{x}))) - y|$ on a training set $\{{{\bm{x}}}_i, y_i\}_{i=1}^{N_t}$. Then, according to Theorem 2 in \cite{Huang2021PowerOD}, the generalization error of the optimized hypothesis is bounded by the following form with probability at least $(1-\delta)$ over the choice of $i.i.d$ training data ${\cal{S}}$ from $\mathcal{Z}$:
\begin{equation}\label{eq:qkernel_learning_bound}
|R-R_{e}^{\text{qkernel}}| < \mathcal{O}\left(\sqrt{\frac{\ceil{\|\bm{w}\|^2}}{N_t}}+\sqrt{\frac{\log(4/\delta)}{N_t}}\right).
\end{equation}
% where $||\bm{w^*}||=\sqrt{\langle \bm{w^*}, \bm{w^*} \rangle }= \sqrt{\Sigma_i^{N_t} \Sigma_j^{N_t} w^*_iw^*_j \langle \phi(\bm{x}_i), \phi(\bm{x}_j)\rangle}$.
\end{theorem}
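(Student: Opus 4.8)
The plan is to recognize this statement as a direct specialization of classical Rademacher-complexity generalization theory to the reproducing kernel Hilbert space (RKHS) induced by the quantum kernel, precisely as carried out in Theorem~2 of \cite{Huang2021PowerOD}. First I would make the feature-map structure explicit: the quantum kernel $\kappa(\bm{x},\bm{x}')=|\braket{\bm{x}'}{\bm{x}}|^2=\mathrm{tr}(\rho(\bm{x})\rho(\bm{x}'))$ with $\rho(\bm{x})=\ket{\bm{x}}\bra{\bm{x}}$ is the Hilbert--Schmidt inner product of the encoded density operators, so it defines a feature map $\phi(\bm{x})=\rho(\bm{x})$ on which the hypothesis $f_{\bm{w}}(\bm{x})=\sum_j w_j\kappa(\bm{x}_j,\bm{x})=\langle\omega,\phi(\bm{x})\rangle$ is a linear functional with RKHS weight $\omega=\sum_j w_j\phi(\bm{x}_j)$. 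The admissible hypotheses then form a norm-bounded linear class $\mathcal{F}_B=\{\bm{x}\mapsto\langle\omega,\phi(\bm{x})\rangle:\norm{\omega}\le B\}$, with $B$ governed by $\norm{\bm{w}}$.

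Second, I would bound the empirical Rademacher complexity of $\mathcal{F}_B$ using the standard kernel estimate $\widehat{\mathcal{R}}_{N}(\mathcal{F}_B)\le \tfrac{B}{N_t}\sqrt{\sum_{i=1}^{N_t}\kappa(\bm{x}_i,\bm{x}_i)}$. Since each encoded pure state is normalized, $\kappa(\bm{x},\bm{x})=\mathrm{tr}(\rho(\bm{x})^2)=1$, which collapses this to $\widehat{\mathcal{R}}_{N}(\mathcal{F}_B)\le B/\sqrt{N_t}$. The loss $l(f_{\bm{w}}(\bm{x}),y)=|\min(1,\max(-1,f_{\bm{w}}(\bm{x})))-y|$ is the composition of the $1$-Lipschitz clipping map with a $1$-Lipschitz absolute-value penalty, hence $1$-Lipschitz in its first argument; Talagrand's contraction lemma therefore transfers the estimate to the composite loss class without changing its order. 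Invoking the standard Rademacher uniform-convergence theorem \cite{mohri2018foundations}, with probability at least $1-\delta$ every $f_{\bm{w}}\in\mathcal{F}_B$ obeys $|R-R_e^{\text{qkernel}}|\le 2\widehat{\mathcal{R}}_{N}(l\circ\mathcal{F}_B)+\mathcal{O}(\sqrt{\log(1/\delta)/N_t})$, and substituting the contraction bound gives the claimed $\mathcal{O}(\sqrt{\ceil{\norm{\bm{w}}^2}/N_t}+\sqrt{\log(4/\delta)/N_t})$.

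The main obstacle is bookkeeping rather than conceptual. One must relate the dual SVM coefficient vector, whose entries are the $w_j=\alpha_j y_j$ appearing in $f_{\bm{w}}$, to the genuine RKHS norm $\norm{\omega}=\sqrt{\sum_{i,j}w_iw_j\kappa(\bm{x}_i,\bm{x}_j)}$ that actually drives the Rademacher complexity, and then to apply a union bound over a dyadic grid of norm budgets so that the single random weight $\bm{w^*}$ produced by training is covered uniformly without prior knowledge of its scale; this stratification is exactly what produces the ceiling $\ceil{\norm{\bm{w}}^2}$ and the explicit constant $4$ inside the logarithm. Since the present statement merely restates Theorem~2 of \cite{Huang2021PowerOD}, I would ultimately defer the precise constant tracking to that reference and use the bound as a black box in the subsequent coreset analysis, where it supplies the $|R-R_e^{\text{qkernel}}|$ term that is combined with the coreset-error estimate via the triangle inequality.
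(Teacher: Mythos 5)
Your proposal is correct and is essentially the paper's own approach: the paper gives no proof of this statement at all, importing it verbatim as Theorem~2 of \cite{Huang2021PowerOD}, which is precisely the black-box deferral you settle on in your final paragraph. Your Rademacher-complexity sketch (the feature map $\phi(\bm{x})=\rho(\bm{x})$ with $\kappa(\bm{x},\bm{x})=1$, contraction through the $1$-Lipschitz clipped loss, and the stratified union bound over norm budgets that produces the ceiling $\ceil{\|\bm{w}\|^2}$ and the constant $4$ in the logarithm) faithfully reconstructs how the cited theorem is established in that reference, so nothing further is required.
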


\begin{lemma}\label{lemma:kernel}
The function $l(f_{\bm{w}}(\bm{x})),y)=|\min(1,\max(-1,\sum_j w_j\kappa(\bm{x}_j,\bm{x}))) - y|$, where $\kappa(\bm{x},\bm{x}')$ is the quantum kernel function, is $\lambda$-Lipschitz continuous in feature space with 
\begin{equation}
\lambda=2k_c\sqrt{d_{\bm{x}}}\max_i|w_i|\cdot(1+(N_q-1)r).
\end{equation}
\end{lemma}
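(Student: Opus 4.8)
The plan is to mirror the structure of the proof of Lemma~\ref{lemma:qnn}, reducing the Lipschitz constant of the composite loss to a gradient bound on the kernel expansion and then differentiating through the feature map $U_e$. First I would observe that the loss factorizes as $l(f_{\bm{w}}(\bm{x}),y) = \left|\,\mathrm{clip}\bigl(h(\bm{x})\bigr) - y\,\right|$, where $h(\bm{x}) = \sum_j w_j \kappa(\bm{x}_j,\bm{x})$ and $\mathrm{clip}(t) = \min(1,\max(-1,t))$. Since both the absolute value and the clipping map are $1$-Lipschitz and $y$ does not depend on $\bm{x}$, the chain rule gives $\left|\partial l/\partial x_t\right| \le \left|\partial h/\partial x_t\right|$ for every coordinate $t$, so it suffices to control $\partial h/\partial x_t$. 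As in Lemma~\ref{lemma:qnn}, I would then pass to the Jacobian through $\max\|J\|_2 \le \sqrt{d_{\bm{x}}}\,\max_t\left|\partial l/\partial x_t\right|$.

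Next I would expand $\partial h/\partial x_t = \sum_i w_i\,\partial\kappa(\bm{x}_i,\bm{x})/\partial x_t$, a sum of $k_c$ terms, and bound each kernel derivative. Writing $\kappa(\bm{x}_i,\bm{x}) = \left|\braket{\bm{x}_i}{\bm{x}}\right|^2$ and $a_i = \braket{\bm{x}_i}{\bm{x}}$, the product rule yields $\partial\kappa/\partial x_t = 2\,\mathcal{R}\bigl(\overline{a_i}\,\bra{\bm{x}_i}\partial_t\ket{\bm{x}}\bigr)$, so by Cauchy--Schwarz together with the normalizations $\left|a_i\right|\le 1$ and $\|\ket{\bm{x}_i}\|=1$ one obtains $\left|\partial\kappa/\partial x_t\right| \le 2\,\|\partial_t\ket{\bm{x}}\|_2$. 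This reduces the whole estimate to controlling the norm of the state derivative $\partial_t\ket{\bm{x}} = \partial\bigl(U_e(\bm{x})\ket{+}^{\otimes N_q}\bigr)/\partial x_t$.

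The crux is therefore bounding $\|\partial_t\ket{\bm{x}}\|_2$, and this is the step I expect to be the main obstacle, because $U_e(\bm{x}) = e^{G(\bm{x})}$ has a generator $G(\bm{x}) = \sum_i x_i\sigma^Z_i + \sum_{i,j}(\pi-x_i)(\pi-x_j)\sigma^Z_i\sigma^Z_j$ that does not commute with its own derivative, so the naive rule $\partial_t e^{G} = (\partial_t G)\,e^{G}$ fails. I would instead invoke Duhamel's formula $\partial_t e^{G} = \int_0^1 e^{sG}(\partial_t G)\,e^{(1-s)G}\,ds$; taking $G$ to be anti-Hermitian (with the conventional imaginary prefactor absorbed) so that each $e^{sG}$ is unitary, taking norms under the integral sign gives $\|\partial_t\ket{\bm{x}}\|_2 \le \|\partial_t G\|_{\mathrm{op}}$. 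It then remains to count the contributions to $\partial_t G$: the single-qubit term $\sigma^Z_t$ of operator norm $1$, together with the $N_q-1$ entangling terms in which $x_t$ appears, each carrying a $(\pi-x_{j'})$ coefficient whose magnitude is absorbed into $r$, giving $\|\partial_t G\|_{\mathrm{op}} \le 1 + (N_q-1)r$.

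Assembling the pieces, each of the $k_c$ summands of $\partial h/\partial x_t$ is bounded by $\left|w_i\right|\cdot 2\bigl(1+(N_q-1)r\bigr) \le 2\max_i\left|w_i\right|\bigl(1+(N_q-1)r\bigr)$, whence $\left|\partial l/\partial x_t\right| \le 2k_c\max_i\left|w_i\right|\bigl(1+(N_q-1)r\bigr)$, and multiplying by $\sqrt{d_{\bm{x}}}$ from the Jacobian bound produces the claimed constant $\lambda = 2k_c\sqrt{d_{\bm{x}}}\max_i\left|w_i\right|\bigl(1+(N_q-1)r\bigr)$. Beyond the matrix-exponential derivative, the only remaining care is the \emph{bookkeeping} of which entangling terms carry $x_t$ and of the precise constant $r$ absorbing the $(\pi-x)$ factors; these are routine once the Duhamel norm bound is in place.
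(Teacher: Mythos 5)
Your proposal is correct and reaches the stated constant, and most of its skeleton coincides with the paper's proof: the same decomposition of $l$ into a $1$-Lipschitz outer map composed with $f(\bm{x})=\sum_j w_j\kappa(\bm{x}_j,\bm{x})$, the same $\sqrt{d_{\bm{x}}}$ Jacobian reduction, and the same per-kernel derivative estimate $\lvert\partial_t\kappa(\bm{x}_i,\bm{x})\rvert\le 2\lVert\partial_t\ket{\bm{x}}\rVert_2$ (the paper phrases this as $2k_c\max_k\lvert\bm{w}_k\rvert$ times the spectral norm of $\partial U(\bm{x})/\partial x_j$, via the real-part expression involving $\rho(\bm{x}_k)$). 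The genuine divergence is in how the derivative of the matrix exponential is controlled. You invoke Duhamel's formula, $\partial_t e^{G}=\int_0^1 e^{sG}(\partial_t G)e^{(1-s)G}\,ds$, to get $\lVert\partial_t e^{G}\rVert\le\lVert\partial_t G\rVert_{\mathrm{op}}\le 1+(N_q-1)r$; the paper never needs Duhamel, because it factors $U(\bm{x})$ \emph{exactly} into commuting single- and two-qubit exponentials $U_s=\prod_k\exp(ix_k\sigma^Z_k)$ and $U_e=\prod_{k<l}\exp(ix_kx_l\sigma^Z_k\sigma^Z_l)$ and then applies the ordinary product rule factor by factor, arriving at the same $1+(N_q-1)r$. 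One correction to your narrative: your stated reason for needing Duhamel — that $G$ "does not commute with its own derivative," so the naive rule $\partial_t e^{G}=(\partial_t G)e^{G}$ fails — is false for this encoding. Every term of $G$ is a product of $\sigma^Z$ operators, hence diagonal in the computational basis, so $[G,\partial_t G]=0$ and the naive rule holds; this commutativity is precisely what licenses the paper's factorization. Your Duhamel route remains valid (it requires no commutativity whatsoever) and is in fact the more robust argument, since it would carry over verbatim to feature maps whose generator terms do not commute, where the paper's factorization would break down. Minor bookkeeping: "absorbing" the $(\pi-x_{j'})$ coefficients into $r$ is loose — with the preliminaries' feature map each such term is bounded by $\pi+r$, not $r$; the paper's appendix quietly switches to coefficients $x_jx_{j'}$, for which $(N_q-1)r$ is exact (an inconsistency internal to the paper, so it does not count against you), and both you and the paper ignore the extra factor of $2$ that the squared encoding $(U(\bm{x})H^{\otimes N_q})^2$ would contribute under the product rule.
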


\begin{proof}
This follows immediately from evaluating the norm of the weight vector in the mapped Hilbert space. The function $l$ can be viewed as $l=g \circ f$, where 
\begin{equation}
g(f)=|\min(1,\max(-1,f)) - y|,
\end{equation}
and 
\begin{equation}
f({\bm{w}},\bm{x}) = \sum_{j=1}^{k_c} w_j \kappa({\bm{x}_j},\bm{x}),
\end{equation}
where $\kappa:\mathbb{R}\times\mathbb{R}\rightarrow \mathbb{R}$ is the quantum kernel function and $k_c$ being the size of the coreset. The function $g$ has a Lipschitz constant of $1$ with respect to $f$. Concretely, it is defined as follows
\begin{equation}
\kappa(\bm{x}_j,\bm{x}_k)=\langle \phi(\bm{x}_j),\phi(\bm{x}_k) \rangle = |\langle0|U_{\bm{x}_j}^\dagger U_{\bm{x}_k}|0\rangle|^2,
\end{equation}
where $U_{\bm{x}}=(U(\bm{x})H^{\otimes N_q})^2 |0\rangle^{\otimes N_q}$, $U(\bm{x})=\exp(\sum_{j=1}^{N_q}\bm{x}_j\sigma^Z_j+\sum_{j,j'=1}^{N_q}\bm{x}_j\bm{x}_{j'}\sigma^Z_j\sigma^Z_{j'})$, and $N_q$ is system size of mapped state\\
For any fixed ${\bm{w}}$, we have
\begin{align}\label{eq:tmp0}
\left|f({\bm{w}},\bm{x})-f(\bm{w},\bm{x}')\right| & \leq \max_{\bm{x}}\sqrt{\sum_{j=1}^{d_{\bm{x}}}\left|\frac{\partial{f}}{\partial{x_j}}\right|^2}\cdot\left|\bm{x}-\bm{x}'\right| \cr
& \leq \sqrt{d_{\bm{x}}}\max_{j}\left|\frac{\partial{f}}{\partial{x_j}}\right|\cdot\left|\bm{x}-\bm{x}' \right|.
\end{align}

 Assume we train the SVM with quantum kernel on a coreset $\{ \bm{x}_i \}_{i=1}^{k_c}$, and let $\rho(\bm{x})=U_{\bm{x}}|0\rangle \langle 0|U^{\dagger}_{\bm{x}}$, we have
\begin{align}
    \left|\frac{\partial{f}}{\partial{x_j}}\right| &= 2\left|\sum_{k=1}^{k_c} w_k \mathcal{R}(\langle 0| \frac{\partial{U^{\dagger}(\bm{x})}}{\partial{x_j}}\rho(\bm{x}_k)U(\bm{x})|0\rangle)\right|\cr
    & \leq 2k_c\cdot \max_k |\bm{w}_k| \cdot \left|\langle 0| \frac{\partial{U^{\dagger}(\bm{x})}}{\partial{x_j}}\rho(\bm{x}_k)U(\bm{x})|0\rangle\right|\cr
    & = 2k_c\cdot \max_k |\bm{w}_k| \cdot \left|\frac{\partial{U(\bm{x})}}{\partial{x_j}}\right|_2. \label{eq:tmp1}
\end{align}
% As $U_{\bm{x}}=(U(\bm{x})H^{\otimes N_q})^2 |0\rangle^{\otimes N_q}$, we have
% \begin{align}
%     \|\frac{\partial{U(\bm{x})}}{\partial{x_j}}\|_2 &= \|\frac{\partial{U(\bm{x})}}{\partial{x_j}}H^{\otimes N_q}U(\bm{x})H^{\otimes N_q}+U(\bm{x})H^{\otimes N_q}\frac{\partial{U^{\dagger}_Z(\bm{x})}}{\partial{x_j}}H^{\otimes N_q}\|_2\cr
%     & \leq \|\frac{\partial{U(\bm{x})}}{\partial{x_j}}H^{\otimes N_q}U(\bm{x})H^{\otimes N_q}\|_2+\|U(\bm{x})H^{\otimes N_q}\frac{\partial{U^{\dagger}_Z(\bm{x})}}{\partial{x_j}}H^{\otimes N_q}\|_2,\\
% \end{align}
% The second inequality uses the Cauchy-Schwarz inequality. \\
Since $U(\bm{x})=\exp(\sum_{j=1}^{N_q}\bm{x}_j\sigma^Z_j+\sum_{j,j'=1}^{N_q}\bm{x}_j\bm{x}_{j'}\sigma^Z_j\sigma^Z_{j'})$, we denote $U_s=\prod_{k=1}^{N_q}\exp(ix_k\sigma^Z_k)$ and $U_e=\prod_{k<l}^{N_q} U_{k,l}$, where $U_{k,l}=\exp(ix_k x_l\sigma^Z_k\sigma^Z_l)$,
then the norm of the derivative of $U(\bm{x})$ with respect to $x_j$ can be bounded,
\begin{align}
    \left|\frac{\partial{U(\bm{x})}}{\partial{x_j}}\right|_2 &= \left| \frac{\partial{U_s}}{\partial{x_j}} U_e + U_s \frac{\partial{U_e}}{\partial{x_j}} \right|\\
    &\leq \left|\frac{\partial{U_s}}{\partial{x_j}}\right|_2\cdot \left|U_e\right|_2+\left|U_s\right|_2\cdot \left|\frac{\partial{U_e}}{\partial{x_j}}\right|_2\\
    & \leq \left|i\sigma^Z_j\right| \cdot \left|\exp(ix_j\sigma^Z_j)\right| \cdot \left|\prod^{N_q}_{k\neq j}\exp(i x_k \sigma^Z_k)\right| \cr
    &+ \left|\sum_{k\neq j}^{{N_q}}i x_k\sigma_{k}^Z\sigma_{j}^ZU_{k,j}\cdot \prod_{m<n, (m,n) \neq (k,j) \text{or} (j,k)}^{N_q}U_{m,n}\right|\\
    & \leq 1+(N_q-1)\max_j|x_j|. \label{eq:tmp2}
\end{align}
Substitute Eq.(\ref{eq:tmp2}) to Eq.(\ref{eq:tmp1}), we have
\begin{equation}\label{eq:tmp3}
    \left|\frac{\partial{f}}{\partial{x_j}}\right| \leq 2k_c\max_i|\bm{w}_i|\cdot(1+(N_q-1)r).
\end{equation}
Assuming $r$, the maximum feature value of $\bm{x}$, is bounded and substituting Eq.(\ref{eq:tmp3}) to Eq.(\ref{eq:tmp0}), we can conclude the function $l(f_{\bm{w}}(\bm{x})),y)$ is $2k_c\sqrt{d_{\bm{x}}}\max_i|\bm{w}_i|\cdot(1+(N_q-1)r)$-Lipschitz continuous.
\end{proof}

\subsection{Proof of Theorem \ref{thm:qkernel-coreset}}

\begin{proof}
The proof of Theorem \ref{thm:qkernel-coreset} is similar to the proof of Theorem \ref{thm:qnn-coreset}, replacing the relevant Lipschitz constant and generalization bound from Lemma \ref{lemma:kernel} and Theorem \ref{thm:qnn-coreset} respectively. 
\end{proof}

\end{document}